\newtheorem{proposition}{Proposition}
\newtheorem{lemma}{Lemma}
\newtheorem{definition}{Definition}
\begin{document}
%
\title{Energy-constrained LOCC-assisted quantum capacity of bosonic dephasing channel }
%
%
%

\author{Amir~Arqand,
        Laleh Memarzadeh,
        and~Stefano~Mancini
\thanks{A. Arqand and L. Memarzadeh are with the Department of Physics, Sharif University of Technology, Tehran, Iran,
P.O. Box: 11365-9161.}
\thanks{S. Mancini, School of Science \& Technology, University of Camerino, I-62032 Camerino, Italy and
INFN Sezione di Perugia, I-06123 Perugia, Italy }
}

\maketitle

\begin{abstract}
We study the LOCC-assisted quantum capacity of bosonic dephasing channel with energy constraint on input states. We start our analysis by focusing on the energy-constrained squashed entanglement of the channel, which {is an} upper bound for the energy-constrained LOCC-assisted quantum capacity. As computing energy-constrained squashed entanglement of the channel is challenging due to a double optimization (over the set of density matrices and the isometric extensions of a squashing channel), we first derive an upper bound for it, and then we discuss how tight that bound is for energy-constrained LOCC-assisted quantum capacity of bosonic dephasing channel. We prove that the optimal input state is diagonal in the Fock basis. 
Furthermore, {we prove that for a generic channel, the optimal squashing channel
belongs to the set of symmetric quantum Markov chain inducer (SQMCI) channels of the channel system-environment output, provided that such a set is non-empty.
With supporting arguments, we conjecture that this is instead the case for the bosonic dephasing channel. Hence, for it we analyze two explicit examples of squashing channels which are not SQMCI, but are symmetric. Through them}, we derive explicit upper and lower bounds for the energy-constrained LOCC-assisted quantum capacity of the bosonic dephasing channel in terms of its quantum capacity with different noise parameters. {As the difference between upper and lower bounds is at most of the order $10^{-1}$, we conclude that the bounds are tight. Hence we} provide a very good estimation of the LOCC-assisted quantum capacity of the bosonic dephasing channel. 
\end{abstract}

\begin{IEEEkeywords}
IEEEtran, journal, \LaTeX, paper, template.
\end{IEEEkeywords}

%
\IEEEpeerreviewmaketitle

\section{Introduction}
One of the essential steps for the implementation of quantum information protocols and the development of quantum technology is the establishment of reliable communication between two parties. That motivates analyzing the capacity of quantum channels, especially quantum capacity, which corresponds to the reliable rate of sharing entanglement between two points. 

As continuous-variable systems are promising candidates for quantum communication, analyzing the capacity of channels defined over infinite-dimensional Hilbert-spaces is of practical and theoretical importance. In this set of channels, the subset of Gaussian channels that maps Gaussian states to Gaussian states has been studied extensively \cite{Gchannel,Gchannel2,Holevo2001,Harrington2001,Giovannetti2004,Caves2004}. However, there is a strong motivation to go beyond Gaussian channels for having better performance in tasks such as parameter estimation \cite{Adesso2009} and teleportation \cite{Opatrny2000,Mista2006,Olivares2003} 
or to bypass the limitations of Gaussian maps for entanglement distillation \cite{Eisert2002,Fiurasek2002, Fiurasek2002-2,Giedke2002}, error correction \cite{Niset2009}, and quantum repeaters \cite{Namiki2014}. 

In general, computing quantum capacity is challenging because of two necessities. First, {the} optimization of coherent information over the set of input density operators, and second its regularization \cite{Devetak2005}.
The situation {becomes more complicated for non-Gaussian channels compared to Gaussian ones because one cannot limit the analysis to states of Gaussian form, characterized just by covariance matrix and displacement vector.} That makes obtaining analytical {or numerical results for non-Gaussian channels a daunting task.} 
Despite such technical difficulties, {recently, there has been an increasing attention to non-Gaussian channels \cite{Memarzadeh2016,Sabapathy2017,qc_dephasing,Lami2020S}. In particular, in \cite{qc_dephasing} it has been shown that the quantum capacity of bosonic dephasing channel, as an example of a non-Gaussian channel, is achieved by a Gaussian mixture of Fock states.}  

{Here, we are interested in finding the energy-constrained LOCC-assisted quantum capacity of the bosonic dephasing channel. It is worth mentioning that the bosonic dephasing channel describes a snapshot of a quantum Markov process and the channel noise parameter is proportional to the time the bosonic system interacts with the environment in the weak-coupling limit \cite{qc_dephasing, jiang}. 
Furthermore, dephasing is an unavoidable source of noise
in photonic communications \cite{gordon}. 
This happens, for instance,
with uncertainty path length in optical fibers \cite{Derickson1998FiberOT}.

{Energy-constrained LOCC-assisted quantum capacity of bosonic dephasing channel} is the maximum rate at which entanglement can reliably be established between sender and receiver 
when local operations 
and classical communication (LOCC) between sender and receiver is also allowed.
Additionally, we consider energy constraint on the channel input. {The importance of sharing entanglement is related to the key role of this correlation in the implementation of quantum protocols. This is not limited to theoretical investigations and is actually the cornerstone of developing quantum networks \cite{kimble}. This motivates analyzing any factor that affects the rate of entanglement sharing, including LOCC between the sender and the receiver.}

{It was shown that the  LOCC-assisted quantum capacity} is upper bounded by energy-constrained squashed entanglement of the channel \cite{wildelocc}. Computing this upper bound is another challenge because it requires two optimizations, one over the set of input density operators and {another over the set of isometric extensions of a squashing channel.} 

In order to facilitate performing the optimization for computing {the channel energy-constrained squashed-entanglement, we shall use the channel symmetry and analytical techniques to restrict the search over smaller sets of density operators and isometric extensions.} Analytically, we {will} characterize the subset of density operators that includes the optimal input states.
Furthermore, {for a generic channel, we prove that the
the set of SQMCI channels acting on the environment output of the original channel 
must (if not null) contain the optimal squashing channel.
For the bosonic dephasing channel, we conjecture that this set is null, and hence we restrict our analysis to symmetric squashing channels.} For two cases of {50/50 beamsplitter}. {We will analytically prove that for 50/50 beamsplitter} squashing channel, there is an upper bound and a lower bound for LOCC-assisted quantum capacity of bosonic dephasing channel 
with/without energy constraint, in terms of its quantum capacity with/without energy constraint. Numerically we {shall} compute these bounds for inputs subject to energy constraint, which will result in tight bounds. We shall also discuss the value of these bounds when there is no input energy constraint. We also study {symmetric} qubit squashing channels {and in this subset numerical search for the optimal squashing channel.}

The structure of the paper is as follows. In \S\ref{sec:back_not} we {set} our notation and provide essential background on squashed entanglement, LOCC-assisted quantum capacity, and degradability of quantum channels. {\rm Here, we also recall the bosonic dephasing channel.} In \S\ref{sec:optimal_input} we introduce the structure of optimal input state. In {\S\ref{sec:squashing} we define quantum Markov chain inducer (QMCI) and symmetric quantum Markov chain inducer (SQMCI) channels and show how optimal squashing channels and SQMCI channels are related. \S\ref{sec:examples} is devoted to two explicit examples for squashing channels for the bosonic dephasing channel: 50/50 beamsplitter and symmetric qubit channels}
 We summarise and discuss the results in \S\ref{sec:conc}.

\section{Background and notation}\label{sec:back_not}
In this section, we {set} our notation and provide the background required {to follow} the discussions in the next sections.
\subsection{Notation}
In this subsection, we set our notation. {Throughout the paper  we shall mainly deal with 
four input (output) systems. 
 ``$S$'' and ``$S'$'' label respectively the input and the output main system.
Similarly, ``$E$'' and ``$E'$'' label respectively the input and the output environment.
``$R$'' labels the reference system that remains unaltered from input to output.
Finally, ``$F$'' and ``$F'$'' denotes the input and the output environment for squashing channel that we shall introduce later on. The associated Hilbert spaces will be denoted by $\mathcal{H}_X$ and 
$\mathcal{H}_{X'}$ where $X$ can be either $R, S, E, F$ or combinations of them. }
 
By $\mathcal{N}_{X\rightarrow X'}$, we denote a completely positive trace preserving (CPTP) map or, for short, a quantum channel:
{\begin{equation}
    \mathcal{N}_{X\rightarrow X'}: \mathcal{T}(\mathcal{H}_X)\rightarrow\mathcal{T}(\mathcal{H}_{X'}),
\end{equation}}
where $\mathcal{T}(\mathcal{H}_X)$ stands for the set of trace-class operators on $\mathcal{H}_X$. Furthermore, by $\mathcal{L}(\mathcal{H}_X)$ we represent the set of linear operators on Hilbert space $\mathcal{H}_X$.

A unitary extension of channel  $\mathcal{N}_{X\rightarrow X'}$, is a unitary operator
$U:\mathcal{H}_X\otimes\mathcal{H}_Y\to \mathcal{H}_{X'}\otimes\mathcal{H}_{Y'}$ where  $\mathcal{H}_X\otimes\mathcal{H}_Y$ is isomorphic with $\mathcal{H}_{X'}\otimes\mathcal{H}_{Y'}$ , such that:
\begin{align}
    \mathcal{N}_{X\rightarrow X'}(\rho_X)&={\rm Tr}_{Y'}\left(\mathcal{U}^{\mathcal{N}}_{XY}(\rho_X\otimes \ket{0}\bra{0}_Y)\right)\cr
    &={\rm Tr}_{Y'}(U(\rho_X\otimes \ket{0}\bra{0}_Y)U^\dagger)
\end{align}

for all $\rho_X\in \mathcal{T}(\mathcal{H}_X)$, where
\begin{equation}\label{eq:uni_ext}
    \mathcal{U}^{\mathcal{N}}_{XY}[\bullet]:=U\bullet U^\dagger.
\end{equation} 
Similarly, an isometric extension of channel $\mathcal{N}_{X\rightarrow X'}$ is an isometry 
$V: \mathcal{H}_X \to \mathcal{H}_{X'}\otimes\mathcal{H}_{Y'}$, such that:
\begin{equation}
    \mathcal{N}_{X\rightarrow X'}(\rho_X)={\rm Tr}_{Y'}\left(\mathcal{V}^{\mathcal{N}}_{X\rightarrow X'Y'}(\rho_X)\right)={\rm Tr}_{Y'}(V\rho_XV^\dagger),
\end{equation}
for every $\rho_X\in\mathcal{T}(\mathcal{H}_X)$, where
 \begin{equation}\label{eq:iso_extension}
     \mathcal{V}^{\mathcal{N}}_{X\rightarrow X'Y'}[\bullet]:=V\bullet V^\dagger.
 \end{equation}
 Purification of density matrix $\rho_X$ is denoted by $\ket{\phi_{XY}}$ and the density {operator corresponding to it is}
 \begin{equation}
 \label{eq:StatePurification}
     \phi_{XY}:=\ket{\phi_{XY}}\bra{\phi_{XY}}.
 \end{equation}
The von Neumann entropy of an arbitrary state $\rho$ {is} 
\begin{equation}
    \label{eq:von_neumann entropy}
    \mathcal{S}(\rho):=-\operatorname{Tr}(\rho\log\rho).
\end{equation}
Throughout the paper, we use the logarithm to base two. We recall that the conditional entropy of a bipartite quantum state $\rho_{XY}$ is defined as follows:
\begin{equation}
\label{eq:conditional entropy}
    \mathcal{S}(X|Y)_{\rho_{XY}}:=\mathcal{S}(\rho_{XY})-\mathcal{S}(\rho_Y),
\end{equation}
where $\rho_Y=\operatorname{Tr}_Y(\rho_{XY}$).
For a bipartite quantum state $\rho_{SS'}\in\mathcal{T}(\mathcal{H}_S\otimes\mathcal{H}_{S'})$, the mutual information { ${I}(S;S')_{\rho_{SS'}}$} quantifies the correlation between subsystems with reduced density matrices $\rho_S=\operatorname{Tr}_{S'}(\rho_{SS'})$ and $\rho_{S'}=\operatorname{Tr}_{S}(\rho_{SS'})$. It is defined as:
\begin{equation}
\label{eq:mutual_informatio}
{I}(S;S')_{\rho_{SS'}}{:=} \mathcal{S}(\rho_S)+\mathcal{S}(\rho_{S'})-\mathcal{S}(\rho_{SS'}).
\end{equation}

Moreover, for a tri-partite quantum state $\rho_{SS'R}\in\mathcal{T}(\mathcal{H}_S\otimes\mathcal{H}_{S'}\otimes\mathcal{H}_R)$, conditional mutual information ${I}(S;S'|R)_{\rho_{SS'R}}$ quantifies the correlation between density matrices of subsystems $\rho_{SS'}=\rm{Tr}_R(\rho_{SS'R})$ and $\rho_R=\rm{Tr}_{SS'}(\rho_{SS'R})$. This positive quantity is given by
    \begin{equation}
        \label{eq:CondMut}
        {I(S; S'|R)}_{\rho_{SS'R}}:=\mathcal{S}(S|R)_{\rho_{SR}}+\mathcal{S}(S'|R)_{\rho_{S'R}}-\mathcal{S}({SS'}|R)_{\rho_{SS'R}},
    \end{equation}
    where the conditional entropy of a bipartite state is defined in Eq.~(\ref{eq:conditional entropy}).
\subsection{\label{sec:SQ_E}Squashed Entanglement}
In this subsection, we review the definition of {quantities necessary} for introducing the upper bound on the two-way LOCC-assisted quantum capacity of a channel. First, we recall the definition of squashed entanglement of a bipartite system. Then we proceed with reviewing the definition of {squashed entanglement of a channel} and energy constraint squashed entanglement of a channel. 

Squashed entanglement is an entanglement monotone for bipartite quantum states \cite{CW2004} which is a lower bound for the entanglement of formation and an upper bound on the distillable entanglement. It is defined as follows:
\begin{definition}
\label{Def:SquashedEntanglement}
    The squashed entanglement of a bipartite quantum state $\rho_{SS'}\in\mathcal{T}(\mathcal{H}_S\otimes\mathcal{H}_{S'})$ is defined as \cite{CW2004} 
    \begin{equation}\label{eq:ssq}
    E_{sq}(S;S')_{\rho_{SS'}}\coloneqq\frac{1}{2}\inf_{\rho_{SS'E'}} {I}
    (S;S'|{E'})_{\rho_{SS'E'}},
\end{equation}
where 
the infimum is taken over all extensions of $\rho_{SS'}$, that is over all quantum states $\rho_{SS'E'}$ such that $\rho_{SS'}={\rm Tr}_{E'}(\rho_{SS'E'})$. 
\end{definition}

Using the concept of bipartite state squashed entanglement, the {squashed entanglement of a channel} was introduced in \cite{TGW2014}. 
It represents the maximum squashed entanglement that can be generated by the channel.
\begin{definition}
   The squashed entanglement of a channel $\mathcal{N}_{S\rightarrow S'}$, is given by \cite{TGW2014}:
    \begin{equation}\label{eq:csq}
    \tilde{E}_{sq}(\mathcal{N}_{S\rightarrow S'})=\sup_{\ket{\phi_{RS}}}E_{sq}(R;{S'})_{\rho_{RS'}},
\end{equation}
where the supremum is taken over all bipartite pure states $\ket{\phi_{RS}}\in\mathcal{H}_R\otimes\mathcal{H}_S$ and $\rho_{RS'}=(\operatorname{id}_R\otimes\mathcal{N}_{S\rightarrow S'})(\ket{\phi_{RS}}\bra{\phi_{RS}})$.

\end{definition}
An alternative, more practical expression for {squashed entanglement of a channel} is given by \cite{TGW2014}:
\begin{equation}
\label{eq:channel_squashing_alt}
    \tilde{E}_{sq}(\mathcal{N}_{S\rightarrow S'})=\sup_{\rho_S}
        E_{sq}(\rho_S,\mathcal{N}_{S\rightarrow S'}),
\end{equation}
where {supremum is over all input density operators, $\rho_S\in\mathcal{T}(\mathcal{H}_S)$, and}
\begin{align}
    \label{eq:not_for_inf}
    E_{sq}&(\rho_S,\mathcal{N}_{S\rightarrow S'}){:=}\frac{1}{2}\times\cr
    &\inf_{\mathcal{V}^{\mathcal{N}^{sq}_{E\rightarrow E'}}_{E\rightarrow E'F'}}\Big(\mathcal{S}(S'|E')_{{\sigma_{S'E'}}}+\mathcal{S}({S'}|{F'})_{{\sigma_{S'F'}}}\Big).
\end{align}
{Here} the infimum is taken over all isometric extensions of the squashing channel
{{(see Fig.~(\ref{fig:dilation_maps})) and $\sigma_{S'E'}$ and $\sigma_{S'F'}$ are respectively obtained by partial trace over degrees of freedom in $\mathcal{H}_{F'}$ and $\mathcal{H}_{E'}$ of the state 
}} 

\begin{figure}
    \includegraphics[width=\columnwidth]{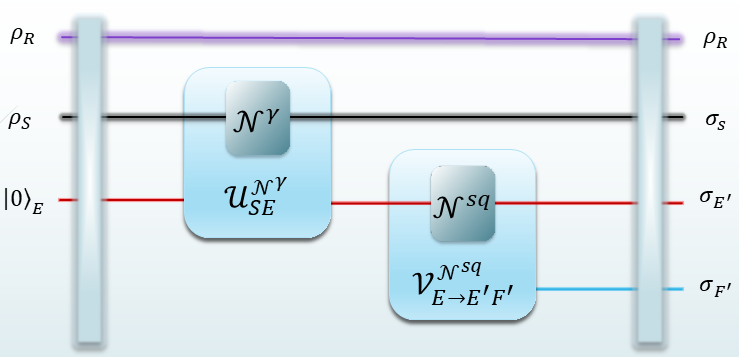}
    \caption{
    Schematic representation of the bosonic dephasing channel, squashing channel, and their isometric {extensions.} 
    }
    \label{fig:dilation_maps}
\end{figure}
\begin{equation}\label{eq:ext}
    \sigma_{S'E'F'}{:=}\left(\mathcal{V}_{E\rightarrow E'F'}^{\mathcal{N}^{sq}_{E\rightarrow E'}}\circ \mathcal{V}^{\mathcal{N}_{S\rightarrow S'}}_{S\rightarrow S'E}\right)(\rho_S),
\end{equation}
where $\mathcal{V}^{\mathcal{N}_{S\rightarrow S'}}_{S\rightarrow S'E}$ {and $\mathcal{V}_{E\rightarrow E'F'}^{\mathcal{N}^{sq}_{E\rightarrow E'}}$ are respectively} the conjugation of isometric
extension of the 
channel $\mathcal{N}_{S\rightarrow S'}$ {and $\mathcal{N}^{sq}_{E\rightarrow E'}$ (see Eq.~(\ref{eq:iso_extension}))}. The superscript $sq$ in $\mathcal{N}^{sq}_{E\rightarrow E'}$ is to label the squashing channel. If there exists a channel for which the infimum in Eq.~(\ref{eq:not_for_inf}) is achieved, we called it the optimal squashing channel.

The definition of the squashed entanglement of a channel can be generalized {to the case where there is a constraint on the energy of input states.}
\begin{definition}
\label{def:ECSEC}
For channel $\mathcal{N}_{S\rightarrow S'}$ with energy constraint at the input, that is ${\rm Tr}(\rho_S G)\leq N$ where $\rho_S$ represents an arbitrary input state, $G$ is the energy observable, and $N\in[0,\infty)$, the energy-constrained squashed entanglement of the channel is given by 
\begin{equation}\label{eq:ecsq}
    \tilde{E}_{sq}(\mathcal{N},G,N)
    =\sup_{\rho_S:{\rm Tr}(G\rho_S)\leq N}E_{sq}(\rho_S,\mathcal{N}_{S\rightarrow S'}),
\end{equation}
where, $E_{sq}(\rho_S,\mathcal{N}_{S\rightarrow S'})$ is defined in Eq.~(\ref{eq:not_for_inf}).
    
\end{definition}

\subsection{Two-way LOCC-assisted quantum capacity}
In this subsection, we bring {to light} the definition of two-way LOCC-assisted quantum capacity, and its energy-constrained form \cite{TGW2014,wildelocc}. Then we recall its upper bound in terms of the {squashed entanglement of a channel}. 

{The performance of quantum channels for reliable quantum communication is quantified by quantum capacity when there is no extra resources, such as shared entanglement or classical communication between sender and receiver. Indeed by allowing further resources, we expect higher rates of information transmission through the channel.
When LOCC is allowed interactively  between sender and receiver, the capability of channel for quantum communication is quantified by its two-way LOCC assisted quantum capacity, which is defined as follows:}
\begin{definition}\label{def:2wayQ}
    The two-way LOCC-assisted quantum capacity $Q^{LOCC}_{S\leftrightarrow S'}(\mathcal{N_{S\rightarrow S'}})$ of quantum channel $\mathcal{N_{S\rightarrow S'}}$ is the highest achievable rate of faithful qubit transmission {(through infinitely many uses)} of the channel with the assistance of unlimited two-way classical 
    communication \cite{BBPSCHW1996,BSW1996}.
\end{definition}
{The above definition is generalized for the situations where there is an upper bound on the average input energy:}
\begin{definition}
  {The energy-constrained} two-way LOCC-assisted quantum capacity $Q^{LOCC}_{S\leftrightarrow S'}(\mathcal{N}_{{S\rightarrow S'}},G,N)$ of a quantum channel $\mathcal{N}_{S\rightarrow S'}$ is {the} two-way LOCC-assisted quantum capacity {of Definition \ref{def:2wayQ},} with the constraint that the {average input energy per channel use {(denoted by observable $G$)} is not larger than $N$.}  \end{definition}
{Note that we could have considered a uniform energy constraint at the input (constraining the energy of each input) instead of considering the average input energy constraint (see e.g. \cite{wildelocc}). However, the two-way LOCC-assisted quantum capacity with uniform energy constraint is upper bounded by the two-way LOCC-assisted quantum capacity with average energy constraint on input. As we derive an upper bound on the latter, the results provide an upper bound for the former too.

Despite the importance of two-way LOCC assisted quantum capacity, there is no explicit compact expression to compute this capacity for a given channel. However,
a}ccording to \cite{wildelocc}, an upper bound on $Q^{LOCC}_{S\leftrightarrow  S'}(\mathcal{N}_{S\rightarrow S'},G,N)$ is given by {squashed entanglement of the channel}: 
\begin{equation}
\label{eq:TLOCCAQEC}
    Q^{LOCC}_{S\leftrightarrow S'}(\mathcal{N}_{S\rightarrow S'},G,N)\leq \tilde{E}_{sq}(\mathcal{N}_{S\rightarrow S'}, G, N),
\end{equation}
where the right hand is given in Eq.~(\ref{eq:ecsq}).
\subsection{Degradable and anti-degradable channels}\label{subsec:deg_anti-deg}
Here we review the definitions of degradable and anti-degradable channels \cite{Cubitt} {The concept of degradable and anti-degradable channels have been playing an important role in various aspects of quantum Shannon theory, including the subject of {quantum channel capacity} \cite{Khatri2020}.

For defining degradable and anti-degradable channels, first the complementary channel needs to be introduced.}
For a channel 
\begin{align}
\label{eq:NforDAD}
       &\mathcal{N}_{X\rightarrow X'}:\bullet\mapsto  {\rm Tr}_{Y}\left(\mathcal{U}^{\mathcal{N}}_{XY}(\bullet\otimes\ket{0}\bra{0})\right),
    \end{align}
 with $\mathcal{U}^{\mathcal{N}}_{XY}$ defined in Eq.~(\ref{eq:uni_ext}), the complementary channel $\mathcal{N}^c_{{X\rightarrow Y'}}$ is given by
 \begin{align}
 \label{eq:Nc}
       &\mathcal{N}^c_{X\rightarrow Y'}:\bullet\mapsto  {\rm Tr}_{X}\left(\mathcal{U}^{\mathcal{N}}_{XY}(\bullet\otimes\ket{0}\bra{0})\right),
    \end{align}
 Setting the definition of complementary channel, degradability and anti-degradability of a channel {are} defined as follows: 
\begin{definition}
    \label{def:degradable}
    {The channel $\mathcal{N}_{{X\rightarrow X'}}$} in Eq.~({\ref{eq:NforDAD}}) is degradable if {there exists a CPTP map $\mathcal{D}_{X'\rightarrow Y'}^{\mathcal{N}}
    $ such that}
    \begin{equation}
        \label{eq:degrabale}
        \mathcal{N}^c_{{X\rightarrow Y'}}=\mathcal{D}^{\mathcal{N}}_{{X'\rightarrow Y'}}\circ\mathcal{N}_{{X\rightarrow X'}},
    \end{equation}
    where $\mathcal{N}^c_{{X\rightarrow Y'}}$ is complementary {to $\mathcal{N}_{{X\rightarrow X'}}$ according to Eq.~(\ref{eq:Nc}).}
\end{definition}
{In simple terms, if a channel is degradable it is possible to construct the environment final state from the output state of the channel by means of a CPTP map. For anti-degradable channels the construction of channel output state from environment final state is possible with a CPTP map. The precise definition follows:}
\begin{definition}
    \label{def:anti-degradable}
   {The channel $\mathcal{N}_{{X\rightarrow X'}}$} in Eq.~({\ref{eq:NforDAD}}) is anti-degradable if {there exists a CPTP map $\mathcal{D}_{Y'\rightarrow X'}^{\mathcal{N}^c}
   $ such that}
    \begin{equation}
        \label{eq:anti-degrabale}
        \mathcal{N}_{{X\rightarrow X'}}=\mathcal{D}_{{Y'\rightarrow X'}}^{\mathcal{N}^c}\circ\mathcal{N}_{{X\rightarrow Y'}}^c,
    \end{equation}
where $\mathcal{N}^c_{{X\rightarrow Y'}}$ is complementary to $\mathcal{N}_{{X\rightarrow X'}}$ {according to Eq.~(\ref{eq:Nc}).}
\end{definition}
Symmetric channels \cite{SmithSmolinWinter2008, MorganWinter2013} are those channels {for which} 
\begin{equation}
    \mathcal{N}^c_{{X\rightarrow Y'}}=\mathcal{N}_{{X\rightarrow X'}}.
\end{equation}
{Indeed for symmetric channels $\mathcal{T}(\mathcal{H}_{X'})$ and $\mathcal{T}(\mathcal{H}_{Y'})$ are isomorphic. Regarding the definition of degradable and anti-degradable channels, it is concluded that
} symmetric channels belong to the intersection of the sets of degradable and anti-degradable channels.

\subsection{\label{sec:Q_CH}Quantum dephasing channel}
{The continuous variable bosonic dephasing channel 
$\mathcal{N}_{{S\rightarrow S'}}^\gamma$,
can successfully model decoherence in many different setups \cite{WM}. As the input space $\mathcal{H}_S$ and output space $\mathcal{H}_{S'}$ are isomorphic, from now on is we denote the bosonic dephasing channel with $\mathcal{N}^\gamma_{S\rightarrow S}$ where $\gamma\in[0,+\infty)$ is related to the dephasing rate.
Bosonic dephasing channels are }described through the following operator-sum representation \cite{qc_dephasing}.
\begin{equation}\label{phase_damping}
    \mathcal{N}_{{S\rightarrow S}}^\gamma(\rho)=\sum_{j=0}^{\infty}K_j\rho K^{\dagger}_j,
\end{equation}
where the Kraus operators are given by
\begin{equation}
    K_j=e^{-\frac{1}{2}\gamma(a^\dagger a)^2}\frac{(-\operatorname{i}\sqrt{\gamma}a^\dagger a)^j}{\sqrt{j!}},
\end{equation}
with $a^\dagger,a$ being bosonic creation and annihilation operators on $\mathcal{H}_S$ and $\gamma\in[0,+\infty)$ is related to the dephasing rate.

The channel can be dilated into a single mode environment using the following unitary $U_{\mathcal{N}_{{S\rightarrow S}}^\gamma}\in\mathcal{L}(\mathcal{H_S}\otimes\mathcal{H}_{E})$
\begin{eqnarray}
\label{eq:unitary_d}
    U_{\mathcal{N}_{{S\rightarrow S}}^\gamma}=&&e^{-\operatorname{i}\sqrt{\gamma}(a^\dagger a)(b+b^\dagger)}\nonumber\\
    =&&e^{-\operatorname{i}\sqrt{\gamma}(a^\dagger a)b^\dagger}e^{-\operatorname{i}\sqrt{\gamma}(a^\dagger a)b}e^{-\frac{1}{2}\gamma(a^\dagger a)^2},
\end{eqnarray}
with $b^\dagger,b$ being bosonic creation and annihilation operators on $\mathcal{H}_E$. The unitary \eqref{eq:unitary_d} has the form of a controlled dephasing with the environment's mode acting as control.

It is not hard to see that the {channel $\mathcal{N}_{{S\rightarrow S}}^\gamma$ has phase covariant property {under the unitary operator}, that is 
\begin{equation}\label{con:cov}
    \mathcal{N}_{{S\rightarrow S}}^\gamma(U_\theta \rho U_{\theta}^{\dagger})=U_\theta \mathcal{N}_{{S\rightarrow S}}^\gamma(\rho) U_{\theta}^{\dagger},
\end{equation}
where unitary operator $U_{\theta}$ is given by
\begin{equation}
    \label{Eq:PhaseO}
    U_{\theta}=e^{\operatorname{i}(a^\dagger a)\theta}
    {\in\mathcal{L}(\mathcal{H}_S).}
\end{equation}
}
Moreover, the output of the complementary channel can be {written as}
\begin{align}
\label{eq:complementary output}
\mathcal{N}^{\gamma^c}_{{S\rightarrow E}}(\rho)&={\operatorname{Tr}}_{S}[U_{\mathcal{N}_{{S\rightarrow S}}^\gamma}(\rho\otimes\ket{0}\bra{0})U_{\mathcal{N}_{{S\rightarrow S}}^\gamma}^\dagger]\notag\\
&={\sum_{n=0}^\infty} \rho_{nn}\ket{-\operatorname{i}\sqrt{\lambda}n}\bra{-\operatorname{i}\sqrt{\lambda}n},
\end{align}
where $\ket{-\operatorname{i}\sqrt{\lambda}n}{\in\mathcal{H}_{E}}$ {is a coherent state} of amplitude {$\sqrt{\lambda}n$ with phase $-\operatorname{i}$ and, $\ket{0}$ is the vacuum state of environment}.
By the above relation and using Eq.~(\ref{Eq:PhaseO}) we can see that
\begin{equation}\label{eq:comp_inv}
    \mathcal{N}_{{S\rightarrow E}}^{\gamma^c}(U_\theta\rho U_\theta^\dagger)=\mathcal{N}_{{S\rightarrow E}}^{\gamma^c}(\rho).
\end{equation}
which means that the complementary channel of $\mathcal{N}_{{S\rightarrow E}}^\gamma$ is invariant under the {unitary \eqref{Eq:PhaseO}}.

\section{Optimal input state}\label{sec:optimal_input}
In this section, we derive an upper bound for the squashed entanglement of the bosonic dephasing channel defined in Eq.~(\ref{phase_damping}). In doing so, we prove that the optimal input state for which such an upper bound can be achieved is diagonal in the Fock basis. We use the structure of optimal input state {to simplify} the expression for {squashed entanglement of the channel} which we {will use in subsequent} sections. sections.

{To analyze energy-constrained squashed entanglement (see Def.~\ref{def:ECSEC}) for the bosonic dephasing channel as energy observable $G$ we use the operator $a^\dag a$ because for a bosonic mode it corresponds (up to a constant) to the Hamiltonian. }
\begin{proposition}\label{prop:fock_optimal}
For a bosonic dephasing channel with parameter $\gamma$ and energy observable $G=a^\dag a$, the supremum in Eq.~(\ref{eq:ecsq}) is achieved by diagonal states in the Fock basis. 
\end{proposition}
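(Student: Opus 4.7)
The plan is to perform a symmetrization over the phase group $\{U_\theta\}_{\theta\in[0,2\pi)}$, exploiting the phase-covariance (\ref{con:cov}) of $\mathcal{N}^{\gamma}_{S\rightarrow S}$, the phase-invariance (\ref{eq:comp_inv}) of its complement, and the concavity of $\rho_S\mapsto E_{sq}(\rho_S,\mathcal{N}^{\gamma}_{S\rightarrow S})$. Concretely I will establish (i) concavity of this functional in $\rho_S$, (ii) its invariance under $\rho_S\mapsto U_\theta\rho_S U_\theta^\dagger$, and (iii) that the $\theta$-average of any admissible $\rho_S$ is a Fock-diagonal admissible state with at least the same $E_{sq}$.

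For concavity I would fix an isometric extension $\mathcal{V}^{\mathcal{N}^{sq}}_{E\rightarrow E'F'}$ and note that the tripartite state $\sigma_{S'E'F'}$ of Eq.~(\ref{eq:ext}) is linear in $\rho_S$, while the conditional von Neumann entropy is a concave functional of the underlying state (standard consequence of strong subadditivity). Hence for each fixed squashing isometry the objective $\mathcal{S}(S'|E')_{\sigma_{S'E'}}+\mathcal{S}(S'|F')_{\sigma_{S'F'}}$ is concave in $\rho_S$. Since the infimum of a family of concave functions is concave (by the usual $\varepsilon$-argument), the map $\rho_S\mapsto E_{sq}(\rho_S,\mathcal{N}^{\gamma}_{S\rightarrow S})$ is concave.

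For $U_\theta$-invariance I combine the two symmetries of the dephasing channel. The maps $\rho\mapsto V(U_\theta\rho U_\theta^\dagger)V^\dagger$ and $\rho\mapsto(U_\theta\otimes I_E)V\rho V^\dagger(U_\theta^\dagger\otimes I_E)$ are both isometric Stinespring dilations of the single channel $\rho\mapsto U_\theta\mathcal{N}^{\gamma}_{S\rightarrow S}(\rho)U_\theta^\dagger$, with identical environment reductions, thanks to (\ref{con:cov}) and (\ref{eq:comp_inv}). By essential uniqueness of Stinespring dilations there exists a unitary $W_\theta$ on $\mathcal{H}_{E}$ with $VU_\theta=(U_\theta\otimes W_\theta)V$. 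Then feeding $U_\theta\rho_S U_\theta^\dagger$ through $(\operatorname{id}_{S'}\otimes\mathcal{V}^{\mathcal{N}^{sq}})\circ\mathcal{V}^{\mathcal{N}^{\gamma}}$ produces the same state as feeding $\rho_S$ through $(\operatorname{id}_{S'}\otimes\mathcal{V}^{\mathcal{N}^{sq}}\circ\mathrm{Ad}_{W_\theta})\circ\mathcal{V}^{\mathcal{N}^{\gamma}}$, up to the local unitary $U_\theta$ acting on $S'$, which is invisible to the conditional entropies $\mathcal{S}(S'|E')$ and $\mathcal{S}(S'|F')$. Since right-composition by the unitary $W_\theta$ is a bijection on the set of isometric extensions of squashing channels, the infimum in Eq.~(\ref{eq:not_for_inf}) is unchanged, yielding $E_{sq}(U_\theta\rho_S U_\theta^\dagger,\mathcal{N}^{\gamma}_{S\rightarrow S})=E_{sq}(\rho_S,\mathcal{N}^{\gamma}_{S\rightarrow S})$.

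Finally, define $\bar\rho_S:=\int_0^{2\pi}\frac{d\theta}{2\pi}\,U_\theta\rho_S U_\theta^\dagger$. Because $U_\theta=e^{\operatorname{i}\theta a^\dagger a}$ commutes with $G=a^\dagger a$, the energy is preserved, $\operatorname{Tr}(G\bar\rho_S)=\operatorname{Tr}(G\rho_S)\le N$, and the integral kills every off-diagonal Fock element, so $\bar\rho_S$ is Fock-diagonal. Combining concavity with $U_\theta$-invariance gives
\begin{equation}
E_{sq}(\bar\rho_S,\mathcal{N}^{\gamma}_{S\rightarrow S})\ge\int_0^{2\pi}\frac{d\theta}{2\pi}\,E_{sq}(U_\theta\rho_S U_\theta^\dagger,\mathcal{N}^{\gamma}_{S\rightarrow S})=E_{sq}(\rho_S,\mathcal{N}^{\gamma}_{S\rightarrow S}),
\end{equation}
so every admissible $\rho_S$ is dominated by an admissible Fock-diagonal state, establishing the proposition. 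The hardest step will be the invariance claim: one has to handle carefully the ``up to isometry on the environment'' freedom in Stinespring dilations of the infinite-dimensional dephasing channel and verify rigorously that absorbing $W_\theta$ into the squashing isometry preserves the set over which the infimum is taken.
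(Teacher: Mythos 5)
Your proposal is correct and follows essentially the same strategy as the paper: twirl the input over the phase group $\{U_\theta\}$, use concavity of the conditional entropy to show the average dominates, use invariance to identify the averaged value with the original one, and observe that the twirl preserves the energy constraint while killing off-diagonal Fock elements. The one place you diverge is the step you yourself flag as hardest: you obtain the intertwiner $W_\theta$ abstractly from essential uniqueness of Stinespring dilations, which would force you to worry about minimality of the single-mode dilation and about whether conjugating by $W_\theta$ really permutes the set of squashing isometries. The paper avoids this entirely by a direct computation: since the dilation unitary in Eq.~(\ref{eq:unitary_d}) is a function of $a^\dagger a$ on the system factor, it commutes exactly with $U_\theta\otimes\operatorname{id}_E$ (Eq.~(\ref{eq:CommutU})), so your $W_\theta$ is simply the identity and the phase rotation passes through the dilation untouched; the twirl on the output then pulls back to a twirl on the input state with no change to the squashing isometry at all. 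A second, cosmetic difference is that the paper applies concavity at the level of a fixed squashing isometry and then infimizes, whereas you first infimize and then invoke concavity of the resulting functional; both orderings are valid. If you replace the Stinespring-uniqueness argument by the explicit commutator check, your proof becomes a complete and clean version of the paper's.
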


\begin{proof}
Define $U_{SX}:(\mathcal{H}_S\otimes\mathcal{H}_X)\rightarrow (\mathcal{H}_{S}\otimes\mathcal{H}_X)$ to be
\begin{equation}\label{eq:U_sx}
    U_{SX}=U_\theta\otimes\operatorname{id}_X,
\end{equation}
where $U_\theta$ {is as in} Eq.~(\ref{Eq:PhaseO}). Moreover, consider an arbitrary joint density operator $\sigma_{SX}\in\mathcal{D}(\mathcal{H}_S\otimes\mathcal{H}_X)$ {and denote}
\begin{equation}\label{eq:sigma^theta}
    \sigma_{SX}^\theta{:=} U_{SX}\sigma_{SX}U^\dagger_{SX}.
\end{equation}
Due to the invariance property of von-Neumann entropy under unitary transformations, we have

\begin{align}\label{eq:phase_invariant_sq}
     \mathcal{S}({S}|{E'})_{{\sigma_{SE'}}}+\mathcal{S}({S}|{F'})_{{\sigma_{SF'}}}=
\mathcal{S}({S}|{E'})_{{\sigma^\theta_{SE'}}}+\mathcal{S}({S}|{F'})_{{\sigma^\theta_{SF'}}},
\end{align}

where $\sigma_{SE'}^\theta$ and 
$\sigma_{SF'}^\theta$ are defined in Eq.~(\ref{eq:sigma^theta}) with the Hilbert space $\mathcal{H}_X$ to be $\mathcal{H}_{E'}$ and $\mathcal{H}_{F'}$, respectively, and
\begin{align}
    &\sigma_{SE'}{:=}(\operatorname{id}_{S}\otimes\mathcal{N}^{sq}_{{E\rightarrow E'}})(U_{\mathcal{N}^{\gamma}_{{S\rightarrow S}}}\rho_{SE}U_{\mathcal{N}^{\gamma}_{{S\rightarrow S}}}^\dagger),\\
    &\sigma_{SF'}{:=}(\operatorname{id}_{S}\otimes\mathcal{N}_{{E\rightarrow E'}}^{sq^{ {c} }})(U_{\mathcal{N}^{\gamma}_{{S\rightarrow S}}}\rho_{SE}U_{\mathcal{N}_{{S\rightarrow S}}^{\gamma}}^\dagger),
\end{align}
with $\rho_{SE}{\in\mathcal{T}(\mathcal{H}_S\otimes\mathcal{H}_E)}$ being an arbitrary {system-environment} initial state. On the other hand, the conditional entropy is {concave, meaning that the following relation holds true:}
\begin{align}\label{eq:concav_cond_ent}
    \frac{1}{2\pi}\int_{0}^{2\pi} d\theta\Big(&\mathcal{S}({S}|{E'})_{{\sigma^\theta_{SE'}}}+\mathcal{S}({S}|{F'})_{{\sigma^\theta_{SF'}}}\Big)\notag\\
    &\leq \mathcal{S}({S}|{E'})_{\bar{\sigma}_{SE'}}+\mathcal{S}({S}|{F'})_{\bar{\sigma}_{SF'}},
\end{align}
where 
\begin{align}\label{eq:sigma_SE'}
    &\bar{\sigma}_{SE'}{:=} (\operatorname{id}_S\otimes\mathcal{N}^{sq}_{E\rightarrow  E'})(U_{\mathcal{N}^\gamma_{S\rightarrow  S}}\Big(\frac{1}{2\pi}\int_0^{2\pi}d\theta\rho_{SE}^\theta\Big) U_{\mathcal{N}^\gamma_{S\rightarrow  S}}^\dagger),
\end{align}   
\begin{align}\label{eq:sigma_SF'}
    &\bar{\sigma}_{SF'}{:=} (\operatorname{id}_S\otimes\mathcal{N}^{sq^{{c}}}_{E\rightarrow  F'})(U_{\mathcal{N^\gamma_{S\rightarrow  S}}}\Big(\frac{1}{2\pi}\int_0^{2\pi}d\theta\rho_{SE}^\theta\Big) U_{\mathcal{N}^{\gamma}_{S\rightarrow  S}}^\dagger),
\end{align}
with $\rho_{SE}^\theta$ defined in the same way as in Eq.~(\ref{eq:sigma^theta}). Considering  Eq.~(\ref{eq:unitary_d}) and Eq.~(\ref{eq:U_sx}), with the aid of simple algebraic steps, it can be seen that the following commutation relation holds true:
\begin{equation}
\label{eq:CommutU}
    [U_{\mathcal{N}^\gamma_{S\rightarrow  S}},U_{SE}]=0.
\end{equation}
It means that the unitary extension of the phase covariant bosonic dephasing channel is invariant under local phase operator and is symmetric.
Using this commutation relation we can conclude that:
\begin{align}
&U_{SE'}(\operatorname{id}_{S}\otimes\mathcal{N}^{sq}_{E\rightarrow  E'})U_{\mathcal{N}^\gamma_{S\rightarrow  S}}=(\operatorname{id}_{S}\otimes\mathcal{N}^{sq}_{E\rightarrow  E'})U_{\mathcal{N}^\gamma_{S\rightarrow  S}}U_{SE},\\
&U_{SF'}(\operatorname{id}_{S}\otimes\mathcal{N}^{sq^{{c}}}_{E\rightarrow  F'})U_{\mathcal{N}^\gamma_{S\rightarrow  S} }=(\operatorname{id}_{S}\otimes\mathcal{N}^{sq^{{c}}}_{E\rightarrow  F'})U_{\mathcal{N}^\gamma_{S\rightarrow  S}}U_{SE}.
\end{align}
{Then, by means of Eq.~(\ref{eq:phase_invariant_sq}), the relation} \eqref{eq:concav_cond_ent} becomes:
\begin{equation}\label{eq:final_prop}
    \mathcal{S}(S|E')_{\sigma_{SE'}}+\mathcal{S}(S|F')_{\sigma_{SF'}}
    \leq \mathcal{S}(S|E')_{\bar{\sigma}_{SE'}}+\mathcal{S}(S|F')_{\bar{\sigma}_{SF'}}.
\end{equation}
{Now,} consider the initial joint state of the system and the environment to be
\begin{equation}\label{eq:genetic_input}
    \rho_{SE}=\rho_S\otimes\ket{0}_E\bra{0}.
\end{equation}
By expanding {$\rho_S$ in the Fock basis as $\rho_S=\sum_{n,m}\rho_{m,n} \ket{n}\bra{m}$,} the integrals in Eq.~(\ref{eq:sigma_SE'}) and Eq.~(\ref{eq:sigma_SF'}) take the following form:
\begin{align}\label{eq:optimal_input}
    &\frac{1}{2\pi}\int_0^{2\pi}d\theta\rho_{SE}^\theta=\frac{1}{2\pi}\int_0^{2\pi}d\theta U_{SE}(\rho_S\otimes\ket{0}_E\bra{0})U^\dagger_{SE}\notag\\
    &=\frac{1}{2\pi}\sum_{m,n=0}^\infty\int_{0}^{2\pi}d\theta \rho_{n,m}e^{\operatorname{i}(n-m)\theta}\ket{n}_S\bra{m}\otimes\ket{0}_E\bra{0}\notag\\
    &=\sum_{n=0}^\infty\rho_{n,n}\ket{n}_S\bra{n}\otimes\ket{0}_E\bra{0}.
\end{align}
Therefore, by considering the above relation along with Eq.~(\ref{eq:final_prop}), it can be seen that the optimal input state for {squashed entanglement of the channel} defined in Eq.~(\ref{eq:channel_squashing_alt}) is {diagonal in the Fock basis.}

For the case where we have energy constraint on the input states the same {arguments} from Eq.~(\ref{eq:U_sx}) to Eq.~(\ref{eq:optimal_input}) holds {true. However,} this time the optimal input state in Eq.(\ref{eq:optimal_input}) takes the form
\begin{equation}\label{eq:energy_cons_optimal_input}
    \rho_{SE}^{opt}{={\sum_{n}}'}\rho_{n,n}\ket{n}_S\bra{n}\otimes\ket{0}_E\bra{0},
\end{equation}
where 
{
\begin{equation}
{\sum_{n}}':=\sum\limits_{\substack{n=0 \\ {\rm Tr}(a^\dagger a \rho_S)\leq N}}^\infty,
\end{equation}
and we shall use this notation hereafter.}
\end{proof}

The proof is based on two main properties. The first one is the concavity of conditional entropy. Similar arguments have been used to bound the squashed entanglement of other channels \cite{wildelocc}.
The second one, is the symmetric property of the unitary extension of the phase covariant 
bosonic dephasing channel without invoking the fact that the isometric extension of a group covariant channel, has covariant property \cite{DasBaumlWilde2020}.

{Thanks to Proposition} \ref{prop:fock_optimal}, the supremum in Eq.~(\ref{eq:ecsq}) is replaced by a supremum over the set of diagonal states in {the Fock} basis satisfying the energy constraint, or in other words, over {the probability distributions of Fock} states satisfying the energy constraint:
\begin{equation}
    \label{eq:OptInt}
    \tilde{E}_{sq}(\mathcal{N}^\gamma_{S\rightarrow S},a^\dagger a, N)= \sup_{p_n:\sum_n n p_n\leq N}E_{sq}\Big(\rho_S^{opt},\mathcal{N}^\gamma_{S\rightarrow S}\Big),
\end{equation}
where 
\begin{equation}
\label{eq:rhoOpt}
    \rho_S^{opt}={{\sum_n}'} p_n\ket{n}_S\bra{n},\quad p_n:=\rho_{n,n},
\end{equation}
is obtained by tracing over the environment degrees of freedom of Eq.~(\ref{eq:energy_cons_optimal_input}).
{Hence, for the optimal input state, the system-environment output state is given by
\begin{equation}
\label{eq:SigmaSEopt}
\sigma_{SE'}^{opt}={{\sum_{n}}'} p_n \ket{n}_{S}\bra{n}\otimes
    \ket{-\operatorname{i}\sqrt{\gamma}n}_E\bra{-\operatorname{i}\sqrt{\gamma}n}.
\end{equation} 
}
{For subsequent developments,} it is more convenient to re-express
Eq.~(\ref{eq:not_for_inf}) in terms of mutual information, namely:
\begin{align}\label{eq:sq_mutual_inf}
 &E_{sq}(\rho_S,\mathcal{N}_{S\rightarrow S'})\cr
 &=\inf_{\mathcal{V}_{E\rightarrow E'F'}^{\mathcal{N}^{sq}_{E\rightarrow E'}}}\frac{1}{2}\big( \mathcal{S}(S'|E')_{\sigma_{S{'}E'}}+\mathcal{S}(S{'}|F')_{\sigma_{S{'}F'}}\big)\notag\\
 &=\inf_{\mathcal{V}_{E\rightarrow E'F'}^{\mathcal{N}^{sq}_{E\rightarrow E'}}}\frac{1}{2}\Big(\mathcal{S}(\sigma_{S{'}E'})-\mathcal{S}(\sigma_{E'})\notag+\mathcal{S}(\sigma_{S{'}F'})-\mathcal{S}(\sigma_{F'})\Big)\notag\\
 &=\mathcal{S}(\sigma_{S{'}})-\sup_{\mathcal{V}_{E\rightarrow E'F'}^{\mathcal{N}^{sq}_{E\rightarrow E'}}}\frac{1}{2}\Big(I(S{'};E')_{\sigma_{S{'}E'}}+I(S{'};F')_{\sigma_{S{'}F'}}\Big),
\end{align}
Therefore, for the bosonic dephasing channel with optimal input state we have:
\begin{align}\label{eq:squashed_entangl}
& E_{sq}(\rho_S^{opt},\mathcal{N}^\gamma_{S\rightarrow S})=\cr
 &\mathcal{S}(\sigma_{S}^{opt})-\sup_{\mathcal{V}_{E\rightarrow E'F'}^{\mathcal{N}^{sq}_{E\rightarrow E'}}}\frac{1}{2}\Big(I(S;E')_{\sigma^{{opt}}_{SE'}}+I(S;F')_{\sigma^{{opt}}_{SF'}}\Big),\cr
\end{align}
where due to the invariance of optimal input state under channel action, the output of the channel is given by $\sigma_S^{opt}=\rho_S^{opt}$ and
\begin{align}
    &\sigma_{SE'}^{opt}={{\sum_{n}}'} p_n \ket{n}_{S}\bra{n}\otimes{\mathcal{N}^{sq}_{E\rightarrow E'}}
    \Big(\ket{-\operatorname{i}\sqrt{\gamma}n}\bra{-\operatorname{i}\sqrt{\gamma}n}\Big),
\label{eq:dephasing_output}\\
    &\sigma_{SF'}^{opt}={{\sum_{n}}'} p_n \ket{n}_{S}\bra{n}\otimes\mathcal{N}^{sq^{{c}}}_{E\rightarrow F'}\Big(\ket{-\operatorname{i}\sqrt{\gamma}n}\bra{-\operatorname{i}\sqrt{\gamma}n}\Big),
    \label{eq:dephasing_outputSF'}
\end{align}
are classical-quantum states.

{In this section we investigated the squashed entanglement of a bosonic dephasing channel which is an upper bound for its energy-constrained two-way LOCC assisted quantum capacity (Eq.~(\ref{eq:TLOCCAQEC})). We showed that to compute this upper bound, two optimizations are required: one over the probability distribution of
Fock states at the input (see Eq.~(\ref{eq:OptInt})) and the other over isometric extensions of squashing channel (see Eq.~(\ref{eq:squashed_entangl})). In the next section, we discuss the optimization over isometric extensions of squashing channels in
Eq.~(\ref{eq:not_for_inf}).

\section{Squashing channels}
\label{sec:squashing}

In this section, first, we recall the definition of quantum Markov chain (QMC) for tri-partite quantum states \cite{winterc,Petz1,Petz2}. Then we focus on a subset of QMC states which we call symmetric quantum Markov chains (SQMC) due to a particular symmetry they have. Subsequently we define quantum Markov chain inducer (QMCI) channels, and symmetric quantum Markov chain inducer (SQMCI) channels, which are quantum channels that transform an input state to respectively QMC state and SQMC state. Then we show that
if for system-environment output of the channel, the set of SQMCI channels is not null, then 
the optimal squashing channel belongs to this set of SQMCI channels.

{\begin{definition}\label{def:markovchain} (\cite{winterc,Petz1,Petz2})
A tri-partite state $\sigma_{SE'F'}$ is a quantum Markov chain (QMC) with the order $S\leftrightarrow E'\leftrightarrow F'$ if and only if there exists a recovery channel $\mathcal{R}_{E'\rightarrow E'F'}$, such that
\begin{equation}
    \sigma_{SE'F'}=(\operatorname{id}_S\otimes\mathcal{R}_{E'\rightarrow E'F'})(\sigma_{SE'}),
\end{equation}
where $\sigma_{SE'}=\operatorname{Tr}_{F'}(\sigma_{SE'F'})$.
\end{definition}
{The following lemma introduces a convenient way to verify if a given tri-partite state is a QMC state:}
\begin{lemma}\label{lemma:condMu}
A tri-partite state $\sigma_{SE'F'}$ is a QMC with the order $S\leftrightarrow E'\leftrightarrow F'$ if and only if the following relation holds \cite{winterc}
\begin{equation}\label{eq:markovind}
    I(S;F'|E')_{\sigma_{SE'F'}}=0.
\end{equation}
\end{lemma}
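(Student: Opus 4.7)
The plan is to establish both directions of the equivalence separately. The forward direction ($\Rightarrow$) follows directly from the data processing inequality (DPI) for quantum mutual information, whereas the backward direction ($\Leftarrow$) is the deeper statement corresponding to the Hayden--Jozsa--Petz--Winter structure theorem for saturation of strong subadditivity. First I would establish the forward direction, which is short and self-contained. Assume $\sigma_{SE'F'} = (\operatorname{id}_S \otimes \mathcal{R}_{E'\to E'F'})(\sigma_{SE'})$. Taking the partial trace over $F'$ on both sides shows that $\operatorname{Tr}_{F'}\circ\mathcal{R}$ acts as the identity on $\sigma_{E'}$, so the $SE'$ marginal of $\sigma_{SE'F'}$ coincides with $\sigma_{SE'}$. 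Applying DPI to the CPTP map $\operatorname{id}_S\otimes\mathcal{R}$ acting on $\sigma_{SE'}$ yields $I(S;E'F')_{\sigma_{SE'F'}} \leq I(S;E')_{\sigma_{SE'}}$, while DPI applied to the CPTP map $\operatorname{id}_S\otimes\operatorname{Tr}_{F'}$ acting on $\sigma_{SE'F'}$ gives the reverse inequality $I(S;E')_{\sigma_{SE'}}\leq I(S;E'F')_{\sigma_{SE'F'}}$. Combining these equalities with the chain rule $I(S;E'F') = I(S;E') + I(S;F'|E')$ forces $I(S;F'|E')_{\sigma_{SE'F'}} = 0$.

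For the backward implication, I would invoke the Hayden--Jozsa--Petz--Winter characterization: vanishing of $I(S;F'|E')$ is equivalent to the existence of a block decomposition $\mathcal{H}_{E'} = \bigoplus_k \mathcal{H}_{E'_{k,L}} \otimes \mathcal{H}_{E'_{k,R}}$ such that
\begin{equation*}
\sigma_{SE'F'} = \bigoplus_k p_k \, \omega^{(k)}_{S E'_{k,L}} \otimes \tau^{(k)}_{E'_{k,R} F'},
\end{equation*}
for a probability distribution $\{p_k\}$ and bipartite states $\omega^{(k)}, \tau^{(k)}$. From this block-product form one reads off an explicit recovery channel $\mathcal{R}_{E'\to E'F'}$: in each block $\mathcal{R}$ leaves $E'_{k,L}$ untouched and attaches $F'$ to $E'_{k,R}$ according to the conditional state $\tau^{(k)}_{E'_{k,R}F'}$. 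Equivalently, the Petz transpose map $\mathcal{P}_{\sigma_{E'F'},\operatorname{Tr}_{F'}}$ satisfies the required recovery identity, which can be checked block-wise by direct computation using the classical-quantum structure of the decomposition.

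The main obstacle is clearly the backward direction, which rests on the non-trivial Petz theorem characterizing equality conditions for monotonicity of the relative entropy; the forward direction is essentially a one-line DPI argument. Since this equivalence is a classical result in quantum Shannon theory, already cited in the paper via \cite{winterc,Petz1,Petz2}, the cleanest route in the write-up is to present the DPI calculation explicitly and then invoke the structure theorem for the converse, referring the reader to those references for its operator-algebraic derivation rather than reproducing the full argument.
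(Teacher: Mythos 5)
Your proof is correct: the forward direction via the data-processing inequality combined with the chain rule $I(S;E'F')=I(S;E')+I(S;F'|E')$, and the converse via the Hayden--Jozsa--Petz--Winter structure theorem (equivalently, the Petz recovery map), is exactly the standard argument for this equivalence. The paper itself supplies no proof of this lemma and simply cites \cite{winterc,Petz1,Petz2}, which is precisely the route you follow, so there is nothing to reconcile between the two.
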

Lemma~\ref{lemma:condMu} implies that, if a tri-partite state $\sigma_{SE'F'}$ is a QMC with the order $S\leftrightarrow E'\leftrightarrow F'$ then the marginal state $\sigma_{SF'}$ is a separable state \cite{winterc,Petz1,Petz2}. Conversely, if $\sigma_{SF'}$ is separable, there exists an extension of it
$\sigma_{SE'F'}$ ($\sigma_{SF'}=\operatorname{Tr}_{E'}(\sigma_{SE'F'})$) such that $I(S;F'|E')_{\sigma_{SE'F'}}=0$. For an example of QMC state with order of
$S\leftrightarrow E'\leftrightarrow F'$ consider the following tri-partite state
\begin{equation}
\label{eq:ExQMC}
 \sigma_{SE'F'}=\sum_{n}p_n \ket{s_n}\bra{s_n}\otimes \ket{e_n}\bra{e_n}\otimes\eta_n,
\end{equation}
where $p_n>0$ for all $n$, $\sum_{n}p_n=1$, $\bra{s_n}s_{n'}\rangle=\delta_{n,n'}$, $\bra{e_n}e_{n'}\rangle=\delta_{n,n'}$ and $\eta_n\in\mathcal{T}(\mathcal{H}_{F'})$ are density matrices. It is easy to see that for the state in Eq.~(\ref{eq:ExQMC}) we have $I(S;F'|E')=0$. Similarly, one can show that the state
\begin{equation}
\label{eq:ExQMC2}
 \sigma_{SE'F'}=\sum_{n}p_n \ket{s_n}\bra{s_n}\otimes\xi_n\otimes \ket{f_n}\bra{f_n}
\end{equation}
with $\bra{f_n}f_{n'}\rangle=\delta_{n,n'}$ and $\xi_n\in\mathcal{T}(\mathcal{H}_{E'})$ is a QMC with the order $S\leftrightarrow F'\leftrightarrow E'$

{In the next definition, we impose more constraints on a QMC  tri-partite state by demanding symmetric properties in the order the state is a QMC:}
\begin{definition}\label{def:SQMC}
A tri-partite state $\sigma_{SE'F'}$ is a Symmetric quantum Markov chain (SQMC), if it is a QMC with the order $S\leftrightarrow E'\leftrightarrow F'$, and $S\leftrightarrow F'\leftrightarrow E'$.
\end{definition}
According to Lemma.~{\ref{lemma:condMu}}, if the tri-partite state $\sigma_{SE'F'}$ is a SQMC state 
with the order $S\leftrightarrow E'\leftrightarrow F'$, and $S\leftrightarrow F'\leftrightarrow E'$ then the marginal states
$\sigma_{SE'}$ and $\sigma_{SF'}$ are separable states.
Also according to Definition ~(\ref{def:markovchain}), it is easy to see that given a SQMC tri-partite state, there exist recovery channels $\mathcal{R}_{E'\rightarrow E'F'}$ and $\mathcal{R'}_{F'\rightarrow E'F'}$, such that
\begin{align}
    &\sigma_{SE'F'}=(\operatorname{id}_S\otimes\mathcal{R}_{E'\rightarrow E'F'})(\sigma_{SE'})\cr
    &\sigma_{SE'F'}=(\operatorname{id}_S\otimes\mathcal{R'}_{F'\rightarrow E'F'})(\sigma_{SF'}),
\end{align}
where $\sigma_{SE'}=\operatorname{Tr}_{F'}(\sigma_{SE'F'})$, and  $\sigma_{SF'}=\operatorname{Tr}_{E'}(\sigma_{SE'F'})$.
The 
quantum state 
\begin{equation}
\label{eq:exampleSQMCI}
    \sigma_{SE'F'}=\sum_{n}p_n\Omega_n\otimes \ket{e_n}\bra{e_{n}}\otimes\ket{f_n}\bra{f_{n}}
\end{equation}
with $\Omega_n\in\mathcal{T}(\mathcal{H}_S)$, $p_n>0, \forall n$, $\sum_{n}p_n=1$ and 
\begin{equation}
   \bra{e_n}e_{n'}\rangle=\bra{f_n}f_{n'}\rangle=\delta_{n,n'}
\end{equation}
is an example of SQMC state.

{In the following, we discuss channels that can convert input states to QMC or SQMC states.}
\begin{definition}
A quantum channel $\mathcal{N}_{E\rightarrow E'}$ with  corresponding isometry conjugation $\mathcal{V}^{\mathcal{N}_{E\rightarrow E'}}_{E\rightarrow E'F'}$ is a quantum Markov chain inducer (QMCI) channel for a bipartite initial state $\sigma_{SE}$, if the tri-partite state
\begin{equation}
    \label{eq:inducer}
    \sigma_{SE'F'}=(\operatorname{id}_S\otimes \mathcal{V}^{\mathcal{N}_{E\rightarrow E'}}_{E\rightarrow E'F'})(\sigma_{SE}).
\end{equation}
is a QMC with the order $S\leftrightarrow E'\leftrightarrow.F'$
\end{definition}

An example of QMCI channel for a bipartite state $\sigma_{SE}$ is the one that transforms it to a product state $\sigma_{SE}\otimes\sigma_{F'}$, a QMC state with order of $S\leftrightarrow E\leftrightarrow F'$. Such a channel is described by an isometry over the extended state:
\begin{equation}
    V_{E\rightarrow E'F'}=\sum_{n,i}\sqrt{\lambda_i} \ket{e_n,\lambda_i}\bra{\lambda_i}
\end{equation}
where $\{\ket{e_n}\}$ forms an orthogonal set in $\mathcal{H}_{E}$ and in its isometric space $\mathcal{H}_{E'}$, while $\{\lambda_i, \ket{\lambda_i}\}$ are eigenvalues and eigenvectors of the quantum state $\sigma_{F'}\in\mathcal{T}(\mathcal{H}_{F'})$. 
 
The definition of QMCI channels can be  generalized to channels that produce QMC with symmetric order:

\begin{definition}\label{def:sqmci}
A quantum channel $\mathcal{N}_{E\rightarrow E'}$ with  corresponding isometry conjugation $\mathcal{V}^{\mathcal{N}_{E\rightarrow E'}}_{E\rightarrow E'F'}$ is a symmetric quantum Markov chain inducer (SQMCI) channel for a bipartite initial state $\sigma_{SE}$, if 
the tri-partite state
\begin{equation}
    \sigma_{SE'F'}=(\operatorname{id}_S\otimes \mathcal{V}^{\mathcal{N}_{E\rightarrow E'}}_{E\rightarrow E'F'})(\sigma_{SE})
\end{equation}
is a SQMC with the order $S\leftrightarrow E'\leftrightarrow F'$ and $S\leftrightarrow F'\leftrightarrow E'$.
\end{definition}

For an example of SQMCI channel for separable initial state
\begin{equation}
\label{eq:iniSep}
    \sigma_{SE}=\sum_n p_n \mu_n\otimes \ket{e_n}\bra{e_n}
\end{equation}
with $\mu_n\in\mathcal{T}(\mathcal{H}_S)$, positive $p_n, \forall n$, $\sum_n p_n=1$ and with orthonormal basis $\{\ket{e_n}\}$, consider an isometry
\begin{equation}
\label{eq:ExQMCI}
    V_{E\rightarrow
    EF'}=\sum_l\ket{e_l,
    f_l}\bra{e_l}.
\end{equation}
The separable state $\sigma_{SE}$ in Eq.~(\ref{eq:iniSep}) under the isometry (\ref{eq:ExQMCI}) is transformed to
\begin{equation}
    \sigma_{SEF'}=\sum_n p_n \mu_n\otimes \ket{e_n}\bra{e_n}\otimes \ket{f_n}\bra{f_n},
\end{equation}
which is a SQMC state with both orders $S\leftrightarrow E\leftrightarrow F'$ and $S\leftrightarrow F'\leftrightarrow E$.
 \begin{proposition} 
 \label{prop:2}
 Consider a channel $\mathcal{N}_{S\rightarrow S'}$ with an isometry conjugation $\mathcal{V}^{\mathcal{N}_{S\rightarrow S'}}_{S\rightarrow S'E}$, and a generic bipartite system-environment output density operator $\sigma_{S'E}$. If the set of SQMCI channels acting on the subspace $\mathcal{H}_{E}$ is not null, it contains the optimal squashing channel.
 \end{proposition}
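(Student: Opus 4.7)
The approach is to apply strong subadditivity (SSA) of von Neumann entropy to the tri-partite state $\sigma_{S'E'F'}$ generated by the isometric extension of the squashing channel, and then to identify the equality conditions with the SQMCI property using Lemma~\ref{lemma:condMu}.

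Starting from the quantity being minimized in Eq.~(\ref{eq:not_for_inf}), I would apply SSA twice, viewing $\sigma_{S'E'}$ and $\sigma_{S'F'}$ as marginals of $\sigma_{S'E'F'}$:
\[
\mathcal{S}(S'|E')_{\sigma_{S'E'}} + \mathcal{S}(S'|F')_{\sigma_{S'F'}} \;\geq\; 2\,\mathcal{S}(S'|E'F')_{\sigma_{S'E'F'}}.
\]
The next step — and the conceptual crux — is to notice that the right-hand side does not depend on the squashing channel chosen. Because $\mathcal{V}^{\mathcal{N}^{sq}}_{E\to E'F'}$ acts on $\mathcal{H}_E$ by conjugation with an isometry, and isometries preserve the von Neumann entropy of any state (as well as of its partial trace with respect to an untouched factor), one has
\[
\mathcal{S}(\sigma_{S'E'F'}) = \mathcal{S}(\sigma_{S'E}), \qquad \mathcal{S}(\sigma_{E'F'}) = \mathcal{S}(\sigma_E),
\]
so that $\mathcal{S}(S'|E'F')_{\sigma_{S'E'F'}} = \mathcal{S}(S'|E)_{\sigma_{S'E}}$, a number determined entirely by the system–environment output of the original channel $\mathcal{N}_{S\to S'}$. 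The SSA inequality therefore becomes a universal lower bound on the objective of the infimum in Eq.~(\ref{eq:not_for_inf}).

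It remains to characterize when this bound is saturated. The first SSA inequality $\mathcal{S}(S'|E') \geq \mathcal{S}(S'|E'F')$ is tight iff $I(S';F'|E')_{\sigma_{S'E'F'}}=0$, and the second iff $I(S';E'|F')_{\sigma_{S'E'F'}}=0$. By Lemma~\ref{lemma:condMu}, these two conditions together say exactly that $\sigma_{S'E'F'}$ is a SQMC in the sense of Definition~\ref{def:SQMC}, which by Definition~\ref{def:sqmci} is precisely the statement that $\mathcal{N}^{sq}_{E\to E'}$ is a SQMCI channel for the input $\sigma_{S'E}$. Hence, whenever the set of SQMCI channels acting on $\mathcal{H}_E$ is non-empty, every member of it saturates the lower bound and therefore attains the infimum in Eq.~(\ref{eq:not_for_inf}), so it is an optimal squashing channel.

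No serious obstacle is anticipated. The delicate point is recognizing that $\mathcal{S}(S'|E'F')$ is a squashing-channel invariant — i.e., an intrinsic property of $\sigma_{S'E}$ — which is what promotes the SSA lower bound from a generic inequality to a tight one; once this is in place, the equivalence of the SSA equality conditions with the SQMCI property is immediate from Lemma~\ref{lemma:condMu}.
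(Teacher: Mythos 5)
Your proof is correct and is essentially the paper's argument in different notation: the identity $\mathcal{S}(S'|E')=\mathcal{S}(S'|E'F')+I(S';F'|E')$ you get from SSA is exactly the chain rule $I(S';E'F')=I(S';E')+I(S';F'|E')$ the paper uses, and your observation that $\mathcal{S}(S'|E'F')=\mathcal{S}(S'|E)$ is squashing-channel invariant is the same fact the paper extracts from the unitary equivalence of isometric extensions. Both arguments then conclude identically by noting that SQMCI channels kill both conditional mutual information terms.
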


\begin{proof}

Consider a generic squashing channel $\mathcal{N}^{sq}_{E\rightarrow E'}$,  and $\widetilde{\mathcal{N}}^{sq}_{E\rightarrow E'}$ a squashing channel which is SQMCI with initial state $\sigma_{S'E}$ and denote their isometry conjugations respectively by $\mathcal{V}^{{{\cal N}}^{sq}_{E\rightarrow E'}}_{E\rightarrow E'F'}$ and $\mathcal{V}^{{\widetilde{\cal N}}^{sq}_{E\rightarrow E'}}_{E\rightarrow E'F'}$. There always exists unitary conjugation $\mathcal{U}_{E'F'\rightarrow E'F'}$ such that
\begin{equation}
    \mathcal{U}_{E'F'\rightarrow E'F'}\circ\mathcal{V}^{{{\cal N}}^{sq}_{E\rightarrow E'}}_{E\rightarrow E'F'}=\mathcal{V}^{{\widetilde{\cal N}}^{sq}_{E\rightarrow E'}}_{E\rightarrow E'F'}
\end{equation} 
As $\mathcal{U}_{E'F'\rightarrow E'F'}$ acts only on $\mathcal{H}_E\otimes\mathcal{H}_F$  the correlation between $\sigma_{S'}\in\mathcal{T}(\mathcal{H}_{S'})$ and $\sigma_{E'F'}\in\mathcal{T}(\mathcal{H}_{E'}\otimes\mathcal{H}_{F'})$ remains intact under the action of $\mathcal{U}_{E'F'\rightarrow E'F'}$. Therefore, 
\begin{equation}
\label{equality0}
I(S;E'F')_{\mathcal{V}^{{\widetilde{\cal N}}^{sq}_{E\rightarrow E'}}_{E\rightarrow E'F'}(\sigma_{SE})}=I(S;E'F')_{\mathcal{V}^{{{\cal N}}^{sq}_{E\rightarrow E'}}_{E\rightarrow E'F'}(\sigma_{SE})}.
\end{equation}
Using the definition of conditional mutual information in (\ref{eq:CondMut}), Eq.~(\ref{equality0}) can also be written as
\begin{align}
\label{equality1}
&I(S;E')_{\mathcal{V}^{{\widetilde{{\cal N}}}_{E\rightarrow E'}}_{E\rightarrow E'F'}(\sigma_{SE})}+I(S;F'|E')_{\mathcal{V}^{{\widetilde{{\cal N}}}_{E\rightarrow E'}}_{E\rightarrow E'F'}(\sigma_{SE})}
=\cr
&I(S;E')_{\mathcal{V}^{{{\cal N}}^{sq}_{E\rightarrow E'}}_{E\rightarrow E'F'}(\sigma_{SE})}+I(S;F'|E')_{\mathcal{V}^{{{\cal N}}^{sq}_{E\rightarrow E'}}_{E\rightarrow E'F'}(\sigma_{SE})}.
\end{align}
or as
\begin{align}
\label{eq:SQMC2}
&I(S;F')_{\mathcal{V}^{{\widetilde{{\cal N}}}_{E\rightarrow E'}}_{E\rightarrow E'F'}(\sigma_{SE})}+I(S;E'|F')_{\mathcal{V}^{{\widetilde{{\cal N}}}_{E\rightarrow E'}}_{E\rightarrow E'F'}(\sigma_{SE})}
=\cr
&I(S;F')_{\mathcal{V}^{{{\cal N}}^{sq}_{E\rightarrow E'}}_{E\rightarrow E'F'}(\sigma_{SE})}+I(S;E'|F')_{\mathcal{V}^{{{\cal N}}^{sq}_{E\rightarrow E'}}_{E\rightarrow E'F'}(\sigma_{SE})}.
\end{align}
As the channel $\widetilde{\mathcal{N}}_{E\rightarrow E'}$, is a SQMCI, by using Lemma.~(\ref{lemma:condMu}), Eqs.~(\ref{equality1}) and (\ref{eq:SQMC2}), result
\begin{align}
\label{eq:SQMC1-2}
&I(S;E')_{\mathcal{V}^{{\widetilde{{\cal N}}}_{E\rightarrow E'}}_{E\rightarrow E'F'}(\sigma_{SE})}
=\cr
&I(S;E')_{\mathcal{V}^{{{\cal N}}^{sq}_{E\rightarrow E'}}_{E\rightarrow E'F'}(\sigma_{SE})}+I(S;F'|E')_{\mathcal{V}^{{{\cal N}}^{sq}_{E\rightarrow E'}}_{E\rightarrow E'F'}(\sigma_{SE})},
\end{align}
and
\begin{align}
\label{eq:SQMC2-2}
&I(S;F')_{\mathcal{V}^{{\widetilde{{\cal N}}}_{E\rightarrow E'}}_{E\rightarrow E'F'}(\sigma_{SE})}
=\cr
&I(S;F')_{\mathcal{V}^{{{\cal N}}^{sq}_{E\rightarrow E'}}_{E\rightarrow E'F'}(\sigma_{SE})}+I(S;E'|F')_{\mathcal{V}^{{{\cal N}}^{sq}_{E\rightarrow E'}}_{E\rightarrow E'F'}(\sigma_{SE})}.
\end{align}
Taking into account that the conditional mutual information is a non-negative quantity Eqs.~(\ref{eq:SQMC1-2}) and (\ref{eq:SQMC2-2}) give the following inequalities:
\begin{align}
&I(S;E')_{\mathcal{V}^{{\widetilde{{\cal N}}}_{E\rightarrow E'}}_{E\rightarrow E'F'}(\sigma_{SE})}
\geq
I(S;E')_{\mathcal{V}^{{{\cal N}}^{sq}_{E\rightarrow E'}}_{E\rightarrow E'F'}(\sigma_{SE})}\cr
&I(S;F')_{\mathcal{V}^{{\widetilde{{\cal N}}}_{E\rightarrow E'}}_{E\rightarrow E'F'}(\sigma_{SE})}
\geq
I(S;F')_{\mathcal{V}^{{{\cal N}}^{sq}_{E\rightarrow E'}}_{E\rightarrow E'F'}(\sigma_{SE})}
\end{align}
Therefore, 
\begin{align}
\label{eq:SQMCI>}
&I(S;E')_{\mathcal{V}^{{\widetilde{{\cal N}}}_{E\rightarrow E'}}_{E\rightarrow E'F'}(\sigma_{SE})}+I(S;F')_{\mathcal{V}^{{\widetilde{{\cal N}}}_{E\rightarrow E'}}_{E\rightarrow E'F'}(\sigma_{SE})}
\geq\cr
&I(S;E')_{\mathcal{V}^{{{\cal N}}^{sq}_{E\rightarrow E'}}_{E\rightarrow E'F'}(\sigma_{SE})}
+
I(S;F')_{\mathcal{V}^{{{\cal N}}^{sq}_{E\rightarrow E'}}_{E\rightarrow E'F'}(\sigma_{SE})}
\end{align}
Regarding the supremum in Eq.~(\ref{eq:sq_mutual_inf}) from Eq.~(\ref{eq:SQMCI>}) we conclude that the optimal squashing channel belongs to the set of SQMCI channels for the state 
$\sigma_{S'E}$ if such a set is not null.
\end{proof}
Indeed Proposition~\ref{prop:2} is useful to restrict the set over which optimization for finding optimal squashing channel is done, conditioned to the fact that for the channel output over extended Hilbert space $\mathcal{H}_{S'}\otimes\mathcal{H}_{E}$ there exists SQMCI channels. Namely, when the set of SQMCI squashing channels is not empty.


So far, we constructed the set over which the optimization for squashing channel must be taken, the following proposition simplify the quantity defined in the last line of Eq.~(\ref{eq:sq_mutual_inf}) which we are going to optimize.
\begin{proposition}\label{prop:mut_equal_qmci}
For any SQMCI channel $\mathcal{N}_{E\rightarrow E'}$ with initial state $\rho_{SE}$ the following equality holds
\begin{align}
\frac{1}{2}\Big(I(S;E')_{\sigma_{SE'}}+I(S;F')_{\sigma_{SF'}}\Big)=I(S;E')_{\sigma_{SE'}},
\end{align}
where the entropic quantities are calculated over the state $\sigma_{SE'F'}$ defined as:
\begin{equation}
    \sigma_{SE'F'}=(\operatorname{id}_S\otimes\mathcal{V}^{\mathcal{N}_{E\rightarrow E'}}_{E\rightarrow E'F'})(\rho_{SE}),
\end{equation}
with $\mathcal{V}^{\mathcal{N}_{E\rightarrow E'}}_{E\rightarrow E'F'}$ being an isometric  extension of the channel $\mathcal{N}_{E\rightarrow E'}$.
\begin{proof}
By assumption, the channel $\mathcal{N}_{E\rightarrow E'}$ with isometry conjugation $\mathcal{V}^{\mathcal{N}_{E\rightarrow E'}}_{E\rightarrow E'F'}$ is a SQMCI channel with initial state $\rho_{SE}$. Hence
\begin{equation}
    \sigma_{SE'F'}=\mathcal{V}^{\mathcal{N}_{E\rightarrow E'}}_{E\rightarrow E'F'}[\rho_{SE}]
\end{equation}
is a SQMC state. Therefore, there exist channels 
$\mathcal{R}_{E'\rightarrow E'F'}$, and $\mathcal{R}'_{F'\rightarrow E'F'}$, such that 
\begin{align}
    &\sigma_{SE'F'}=(\operatorname{id}_S\otimes\mathcal{R}_{E'\rightarrow E'F'})(\sigma_{SE'})\cr
    &\sigma_{SE'F'}=(\operatorname{id}_S\otimes\mathcal{R'}_{F'\rightarrow E'F'})(\sigma_{SF'}),
\end{align}
where 
$\sigma_{SE'}=\operatorname{Tr}_{F'}(\sigma_{SE'F'})$, and  $\sigma_{SF'}=\operatorname{Tr}_{E'}(\sigma_{SE'F'})$.

By defining 
\begin{align}
   &\mathcal{M}_{E'\rightarrow F'}\coloneqq \operatorname{Tr}_{E'}\circ\mathcal{R}_{E'\rightarrow E'F'},\cr &\mathcal{M}'_{F'\rightarrow E'}\coloneqq \operatorname{Tr}_{F'}\circ\mathcal{R}'_{F'\rightarrow E'F'},
\end{align}

and according to the data processing argument, the following inequalities hold:
\begin{align}\label{ineq:n_nc}
    &I(S;E')_{(\mathcal{{M}}'_{F'\rightarrow E'}\circ\mathcal{N}^c_{E\rightarrow F'})(\rho_{SE})}\le I(S;F')_{\mathcal{N}^c_{E\rightarrow F'}(\rho_{SE})}\cr
    &I(S;F')_{(\mathcal{{M}}_{E'\rightarrow F'}\circ\mathcal{N}_{E\rightarrow E'})(\rho_{SE})}\le I(S;E')_{\mathcal{N}_{E\rightarrow E'}(\rho_{SE})}.\cr
\end{align}
On the other hand, the action of the SQMCI channel $\mathcal{N}_{E\rightarrow E'}$ and its complement $\mathcal{N}^c_{E\rightarrow F'}$ on $\rho_{SE}$ is given by
\begin{align}\label{eq:sym}
    &\mathcal{N}_{E\rightarrow E'}(\rho_{SE})=\mathcal{(}\mathcal{M}'_{F'\rightarrow E'}\circ\mathcal{N}^c_{E\rightarrow F'}\big)(\rho_{SE})\cr
    &\mathcal{N}^c_{E\rightarrow F'}(\rho_{SE})=\big(\mathcal{M}_{E'\rightarrow F'}\circ\mathcal{N}_{E\rightarrow E'}\big)(\rho_{SE}),
\end{align}
Combining Eq.~(\ref{eq:sym}) with Eq.(\ref{ineq:n_nc}), we have 
\begin{equation}
    I(S;F')_{\mathcal{N}^c_{E\rightarrow F'}(\rho_{SE})}=I(S;E')_{\mathcal{N}_{E\rightarrow E'}(\rho_{SE})},
\end{equation}
Therefore,
\begin{align}
    \frac{1}{2}\Big(I(S;F')_{\mathcal{N}^c_{E\rightarrow F'}(\rho_{SE})}&+I(S;E')_{\mathcal{N}_{E\rightarrow E'}(\rho_{SE})}\Big)\cr
    &=I(S;E')_{\mathcal{N}_{E\rightarrow E'}(\rho_{SE})}.
\end{align}
\end{proof}
\end{proposition}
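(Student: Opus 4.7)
The claim simplifies immediately to proving $I(S;E')_{\sigma_{SE'}} = I(S;F')_{\sigma_{SF'}}$, so the plan is to establish a pair of matching inequalities and conclude equality. The natural tool is the data processing inequality for quantum mutual information; what I need is, on each side, a CPTP map that turns the $E'$-marginal into the $F'$-marginal and a map in the reverse direction. The SQMCI hypothesis is exactly what supplies these maps: by Definition~\ref{def:sqmci} combined with Definition~\ref{def:markovchain}, the tri-partite state $\sigma_{SE'F'}$ is a QMC in both orders $S\leftrightarrow E'\leftrightarrow F'$ and $S\leftrightarrow F'\leftrightarrow E'$, hence there exist recovery channels $\mathcal{R}_{E'\to E'F'}$ and $\mathcal{R}'_{F'\to E'F'}$ such that $\sigma_{SE'F'}=(\operatorname{id}_S\otimes\mathcal{R}_{E'\to E'F'})(\sigma_{SE'})$ and $\sigma_{SE'F'}=(\operatorname{id}_S\otimes\mathcal{R}'_{F'\to E'F'})(\sigma_{SF'})$.

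From these recoveries I would build the intertwining channels
\begin{equation*}
\mathcal{M}_{E'\to F'}:=\operatorname{Tr}_{E'}\circ\mathcal{R}_{E'\to E'F'},\qquad
\mathcal{M}'_{F'\to E'}:=\operatorname{Tr}_{F'}\circ\mathcal{R}'_{F'\to E'F'},
\end{equation*}
both CPTP by construction. Tracing out $E'$ on both sides of the first recovery identity gives $(\operatorname{id}_S\otimes\mathcal{M}_{E'\to F'})(\sigma_{SE'})=\sigma_{SF'}$, and symmetrically $(\operatorname{id}_S\otimes\mathcal{M}'_{F'\to E'})(\sigma_{SF'})=\sigma_{SE'}$. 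Thus the two marginals are reachable from one another by local CPTP operations acting only on the non-$S$ side.

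Applying the data processing inequality for mutual information then yields $I(S;F')_{\sigma_{SF'}}\le I(S;E')_{\sigma_{SE'}}$ via $\mathcal{M}_{E'\to F'}$, and $I(S;E')_{\sigma_{SE'}}\le I(S;F')_{\sigma_{SF'}}$ via $\mathcal{M}'_{F'\to E'}$; hence equality, which, substituted into the left-hand side of the proposition, gives exactly $I(S;E')_{\sigma_{SE'}}$. The only delicate point is the identification $(\operatorname{id}_S\otimes\mathcal{M}_{E'\to F'})(\sigma_{SE'})=\sigma_{SF'}$, but this follows in one line from the defining property of the recovery channel together with the fact that $\operatorname{id}_S$ commutes with $\operatorname{Tr}_{E'}$, so I do not anticipate any real obstacle beyond careful bookkeeping of subsystems and the order of partial traces.
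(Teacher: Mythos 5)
Your proposal is correct and follows essentially the same route as the paper's own proof: you extract the two recovery channels from the SQMC property, compose each with a partial trace to obtain CPTP maps $\mathcal{M}_{E'\to F'}$ and $\mathcal{M}'_{F'\to E'}$ that carry each marginal to the other, and then apply the data processing inequality in both directions to conclude $I(S;E')_{\sigma_{SE'}}=I(S;F')_{\sigma_{SF'}}$. The only difference is presentational (you state the marginal-conversion identities before invoking data processing, whereas the paper states the inequalities first), so there is nothing substantive to add.
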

This proposition simplifies computing Eq.(\ref{eq:sq_mutual_inf}) when the optimal squashing channel is a SQMCI channel. 

\section{Squashing channel for bosonic dephasing channel}
\label{sec:examples}
{In this section we analyze two examples of squashing channels for bosonic dephasing channel. First, we explain why we chose these  examples from the set of symmetric channels. Then in the following subsections we study in details 50/50 beamsplitter, and symmetric qubit channels as squashing channel. 

For quantum dephasing channel, the output of optimal input state, $\sigma^{opt}_{SE}$ is a separable state as given in
Eq.~(\ref{eq:SigmaSEopt}).
It is worth noticing that if a SQMCI channel, $\mathcal{N}_{E\rightarrow E'}$ with isometry extension $\mathcal{V}^{\mathcal{N}_{E\rightarrow E'}}_{E\rightarrow E'F'}$ exists, it transforms $\sigma^{opt}_{SE}$ to $\sigma^{opt}_{SE'F'}$ for which 
\begin{equation}
\label{eq:StrongSubAd}
    I(S;E'|F')_{\sigma^{opt}_{SE'F'}}=I(S;F'|E')_{\sigma^{opt}_{SE'F'}}=0
\end{equation}
Satisfying Eq.~(\ref{eq:StrongSubAd}) by $\sigma^{opt}_{SE'F'}$ is equivalent to satisfying strong subadditivity property with equality by $\sigma_{SE'F'}$ \cite{winterc}. A natural structure for such states is given by
\begin{equation}
    \sigma^{opt}_{SE'F'}=\sum_n p_n\ket{n}\bra{n}\otimes \ket{e_n}\bra{e_n}\otimes\ket{f_n}\bra{f_n}
\end{equation}
where $\bra{e_m} e_n\rangle=\bra{f_m} f_n\rangle=\delta_{m,n}$. These orthogonality properties play crucial role in satisfying  Eq.~(\ref{eq:StrongSubAd}). 
But as inner product is invariant under isometry and taking into account that coherent states are non-orthogonal, it is impossible to find an isometry such that $V \ket{-\operatorname{i}\sqrt{\gamma}n}_E=\ket{e_n,f_n}_{E'F'}$. Hence, we conjecture that the set of SQMCI channel for density operator $\sigma^{opt}_{SE}$ in Eq.~(\ref{eq:SigmaSEopt}) is empty.
Therefore we switch to squashing channels having a structure close to SQMCI channels, namely the symmetric channels and analyze  Eq.~(\ref{eq:squashed_entangl}) where $\mathcal{N}_{E\rightarrow E'}$ is a symmetric channel. 

Confining our search for squashing channels to the set of symmetric channels, Eq.~(\ref{eq:squashed_entangl}) turns into:
\begin{align}
    &E_{sq} (\rho_S^{opt},{\mathcal{N}}^\gamma_{S\rightarrow S})\notag\\
    &\leq\mathcal{S}(\sigma_S^{opt})-\sup_{\mathcal{V}^{\mathcal{N}^{sq}_{E\rightarrow E'}}_{E\rightarrow E'F'}\in{\mathbb{V}}_{sym}}\frac{1}{2}\Big(I(S;E')_{\sigma^{opt}_{SE'}}+I(S;F')_{\sigma^{opt}_{SF'}}\Big)
\end{align}
where $\sigma_{SE'}^{opt}$ is defined in Eq.~(\ref{eq:dephasing_output}), and $\mathbb{V}_{sym}$ is the set of isometric dilation of symmetric channels. Obviously, when the squashing channel belongs to the set of symmetric channels $\sigma_{SE'}^{opt}=\sigma_{SF'}^{opt}$. Hence we have
\begin{align}
\label{eq:self_comp_sq}
    E_{sq} (\rho_S^{opt},{\mathcal{N}}^\gamma_{S\rightarrow S})
    &\leq\mathcal{S}(\sigma_S^{opt})-\sup_{\mathcal{V}^{\mathcal{N}^{sq}_{E\rightarrow E'}}_{E\rightarrow E'F'}\in\mathbb{V}_{sym}} I(S;E')_{\sigma^{opt}_{SE'}}\notag\\
    &=\mathcal{S}(\sigma_S^{opt})-\sup_{\mathcal{N}^{sq}_{E\rightarrow E'}\in\mathbb{N}_{sym}}I(S;E')_{\sigma^{opt}_{SE'}}
\end{align}
where $\mathbb{N}_{sym}$ is the set of symmetric channels. The last equality holds true because the mutual information 
$I(S;E')_{\sigma^{opt}_{SE'}}$ only depends on the squashing channel, not on its isometric extension. 
}}}

In the next coming subsections, we consider two specific cases. In the first one, we consider 50/50 beamsplitter for squashing channel and, in the second one we restrict the search for the optimal squashing channel to the set of symmetric qubit channels.}

\subsection{
{50/50 beamsplitter squashing channel}
}\label{subsec:onemode}
In this subsection, {we consider 50/50 beamsplitter as the squashing channel.}
{Among one-mode Gaussian symmetric channels the most well known one is 50/50 beamsplitter. Furthermore, this choice is in line with the} results in \cite{TGW2014}, where it is shown that in the set of pure-loss channels, 50/50 beamsplitter is the optimal squashing channel.

{A beamsplitter has two inputs, one playing the role of the environment and the other of the input to the channel.} When the environment mode is kept in the vacuum state, the beamsplitter performs as a Gaussian channel and is described by {the map} \cite{kraus_beam}:
\begin{equation}
    \label{eq:beam_splitter}
    \mathcal{N}^{BS}_\eta(\rho)=\sum_{k=0}^\infty B_k(\eta)\rho B_k^\dagger(\eta),
\end{equation}
where $\eta\in(0,1)$ is the transmissivity of the beamsplitter and $B_k(\eta)$s are {the Kraus operators taking the following explicit form in the Fock basis:}
\begin{equation}
    \label{eq:beamsplitter_kraus}
    B_k(\eta)=\sum_{m=0}^\infty\sqrt{m+k\choose k}(1-\eta^2)^\frac{k}{2}\eta^m\ket{m}\bra{m+k}.
\end{equation}
The beamsplitter transforms a single mode coherent {input} state $\ket{\beta}$ to a single mode coherent {output} state $\ket{\eta\beta}$, with a smaller amplitude \cite{kraus_beam}:
\begin{equation}
    \mathcal{N}^{BS}_\eta(\ket{\beta}\bra{\beta})=\ket{\eta\beta}\bra{\eta\beta}.
\end{equation}
 In this representation, a 50/50 beamsplitter corresponds to $\eta=\frac{1}{\sqrt{2}}$. Therefore, according to Eq.~(\ref{eq:OptInt}), an upper bound on the squashed entanglement of bosonic dephasing channel can be obtained by the following relation:
\begin{equation}\label{eq:g_upper}
    \tilde{E}_{sq}(\mathcal{N}^\gamma_{{S\rightarrow S}},a^\dagger a,N)\leq\sup_{p_n}\Big(\mathcal{S}(\sigma_S^{opt})-I(S;E')_{\sigma^{opt}_{SE'}}\Big),
\end{equation}
where $\sigma^{opt}_S$ and $\sigma^{opt}_{E'}$ are obtained by partial {tracing the following density operator with respect to $S$ and ${E'}$} 
\begin{align}\label{eq:opt_sigma}
    \sigma_{SE'}^{opt}&={{\sum_n}'} p_n\ket{n}_S\bra{n}\otimes\mathcal{N}^{BS}_\frac{1}{\sqrt{2}}\big(\ket{-\operatorname{i}\sqrt{\gamma}n}_{E'}\bra{-\operatorname{i}\sqrt{\gamma}n}\big)\cr
    &={{\sum_n}'} p_n\ket{n}_S\bra{n}\otimes\ket{-\frac{\operatorname{i}}{\sqrt{2}}\sqrt{\gamma}n}_{E'}\bra{-\frac{\operatorname{i}}{\sqrt{2}}\sqrt{\gamma}n}.
\end{align}
As $\sigma_{SE'}^{opt}$ is a separable state,
\begin{align}
   \mathcal{S}(\sigma_S^{opt})-I(S;E')_{\sigma_{SE'}^{opt}}&=\mathcal{S}(\sigma_{SE'}^{opt})-\mathcal{S}(\sigma_{E'}^{opt})\cr
   &=\mathcal{S}(\sigma_{S}^{opt})-\mathcal{S}(\sigma_{E'}^{opt}).
\end{align}
Therefore, Eq.~(\ref{eq:g_upper}) is simplified to
\begin{align}\label{eq:upper_gaussian_bound}
    \tilde{E}_{sq}(\mathcal{N}^\gamma_{S\rightarrow S},a^\dagger a, N)&\leq\sup_{p_n}\Big(\mathcal{S}(\sigma_{S}^{opt})-\mathcal{S}(\sigma_{E'}^{opt})\Big).
\end{align}
From Eq.~(\ref{eq:opt_sigma}) and Eq.~(\ref{eq:upper_gaussian_bound}) it is concluded that:
\begin{align}
    &\tilde{E}_{sq}(\mathcal{N}^\gamma_{S\rightarrow S},a^\dagger a,N)\leq\cr
    &\sup_{p_n}\Bigg(\mathcal{S}\Big({{\sum_n}'} p_n\ket{n}\bra{n}\Big)
    -\mathcal{S}\Big({{\sum_n}'} p_n\ket{\frac{n}{i}\sqrt{\frac{\gamma}{2}}}\bra{\frac{n}{i}\sqrt{\frac{\gamma}{2}}n}\Big)\Bigg)\notag\\
    \label{eq:ubound0}\\
    &=\sup_{p_n}\Bigg(\mathcal{S}\Big( {{\sum_n}'} p_n\ket{n}\bra{n}\Big)
    -\mathcal{S}\Big( {{\sum_n}'} p_n\ket{-\sqrt{\frac{\gamma}{2}}n}\bra{-\sqrt{\frac{\gamma}{2}}n}\Big)\Bigg).\notag\\
    \label{eq:uperbound_g}
\end{align}
{The equality \eqref{eq:uperbound_g}} is due to the invariance property of von-Neumann entropy under the unitary conjugation that transforms coherent state $\ket{\operatorname{i}\alpha}$ to $\ket{\alpha}$,  
$\forall\alpha\in\mathbb{C}$.
In \cite{qc_dephasing} it is shown that the {right hand side  (r.h.s.) of} Eq.~(\ref{eq:uperbound_g}) is
the quantum capacity of a bosonic dephasing channel with a dephasing parameter $\gamma'=\frac{\gamma}{2}$ {and the} energy constraint at the input, that is 
\begin{align}\label{eq:upperbound_sq_qc}
    \mathcal{Q}^{LOCC}_{S\leftrightarrow {S}}(\mathcal{N}_{{S\rightarrow S}}^\gamma,a^\dagger a,N)&\leq\tilde{E}_{sq}(\mathcal{N}_{{S\rightarrow S}}^\gamma,a^\dagger a,N)\cr
    &\leq  Q(\mathcal{N}_{S\rightarrow {S}}^{\frac{\gamma}{2}},a^\dagger a, N).
\end{align}
{Here $Q(\mathcal{N}_{S\rightarrow {S}}^{\gamma},a^\dagger a, N)$ denotes the quantum capacity of bosonic dephasing channel with dephasimg parameter $\gamma$ and average input energy $N$.}

So far, we have derived an upper bound for the energy-constrained {squashed entanglement of the channel}, which {in turn} is an upper bound for the energy-constrained two-way LOCC-assisted capacity.
{ 
Next we derive a lower bound for the energy-constrained two-way LOCC-assisted capacity.
}
In \cite{RCI}, a lower bound on the two-way LOCC-assisted quantum capacity was introduced with the name of reverse coherent information \cite{HorodeckiF2000, DevetakWinter2005, DJKR2006}. The reverse coherent information of a channel $\mathcal{N}_{S\rightarrow S'}$ is defined as
\begin{equation}
\label{eq:reverse}
    \mathcal{I}_R(\mathcal{N}_{S\rightarrow S'}):=\sup_{\rho_S}\Big(\mathcal{S}(\rho_S)-\mathcal{S}(\mathcal{N}^c_{S\rightarrow E}(\rho_S))\Big).
\end{equation}
{It is shown in \cite{reverse} that for a general channel $\mathcal{N}_{S\rightarrow S'}$ the following inequalities hold
\begin{equation}
\label{eq:ineqiRCI}
\mathcal{Q}(\mathcal{N}_{S\rightarrow S'},a^\dagger a,N)\leq \mathcal{I}_R(\mathcal{N}_{S\rightarrow S'})\leq\mathcal{Q}^{LOCC}_{S\leftrightarrow S'}(\mathcal{N}_{S\rightarrow S'},a^\dagger a,N)
\end{equation}
}
{For bosonic dephasing channel we know that the quantum capacity} is achieved by using a mixture of {Fock} states as input, which is invariant under the action of the channel, {namely}
\begin{equation}
    Q(\mathcal{N}_{S\rightarrow {S}}^\gamma)=\sup_{\rho'_S}\Big(\mathcal{S}(\rho'_S)-\mathcal{S}(\mathcal{N}^{\gamma^c}_{{S\rightarrow E}}(\rho'_S))\Big)
\end{equation}
{{where $\rho'_S$ belongs to the set of mixture of Fock states {\cite{qc_dephasing}}.}}
In appendix \ref{sec:app_1} we show that the quantum capacity of the bosonic dephasing channel and its reverse coherent information are equal:
\begin{equation}
    \label{eqq:equality of rev_and_qc}
        \mathcal{I}_R(\mathcal{N}^\gamma_{S\rightarrow {S}})=Q(\mathcal{N}_{S\rightarrow {S}}^\gamma)
\end{equation}
{As constraining the average input energy, within a bounded error leads to truncating the Hilbert-space dimension and the arguments supporting the equality in (100) are valid over truncated Hilbert space dimension\footnote{\label{footnote:energy_const}If we constraint the input average energy as
$\forall\epsilon>0, \; 
\exists d$, such that $\forall N>d : |\sum_{n=0}^{N}np_n-E|<\epsilon$ {then within a bounded error, it results}
truncating the input Hilbert space.},} 
we conclude that for a bosonic dephasing channel the lower bound on its energy-constrained two-way LOCC-assisted quantum capacity
$\mathcal{Q}^{LOCC}_{S\leftrightarrow S'}(\mathcal{N}^\gamma_{S\rightarrow S},a^\dagger a,N)$
is equal to its energy-constrained quantum capacity with parameter $\gamma$. Therefore, taking into
account Eq.~(\ref{eq:upperbound_sq_qc}), {(\ref{eq:ineqiRCI}) and (\ref{eqq:equality of rev_and_qc})} we arrive at
\begin{align}\label{eq:lower_upper}
   Q(\mathcal{N}_{S\rightarrow S'}^\gamma,a^\dagger a, N)&\leq \mathcal{Q}^{LOCC}_{S\leftrightarrow S'}(\mathcal{N}^\gamma,a^\dagger a,N)\cr
   &\leq  Q(\mathcal{N}_{S\rightarrow S'}^{\frac{\gamma}{2}},a^\dagger a, N).
\end{align}
With reference to \cite{qc_dephasing} we can compute both the lower bound and the upper bound in Eq.~(\ref{eq:lower_upper}). Fig.~(\ref{fig:gaussian_L_U_Bound}) represents these bounds for Hilbert spaces {truncated to the} dimension $d=3$ (red curves), $d=10$ (blue curves), versus the noise parameter $\gamma$. In Fig.~(\ref{fig:gaussian_L_U_Bound}) solid curves correspond to the upper bound in Eq.~(\ref{eq:lower_upper}) and dashed {curves} correspond to the lower bound in Eq.~(\ref{eq:lower_upper}).  

As {one can see in Fig.~(\ref{fig:gaussian_L_U_Bound}) lower bound and upper bounds are very close to each other, confirming their tightness.} To better illustrate this fact, in Fig.~(\ref{fig:gaussian_L_U_Bound_rel}) {it is shown the difference between upper and lower bounds} versus noise parameter $\gamma$ for Hilbert-space with dimension $d=2$ (red curves), $d=3$ (green curves), $d=9$ (blue curves), and $d=10$ (orange curves).
As expected, this difference {vanishes} at $\gamma=0$. Furthermore, {as it is seen in Fig.~(\ref{fig:gaussian_L_U_Bound_rel})}, it decreases 
 as well for large vales of the noise parameter.  
 \begin{figure}
    \centering
    \includegraphics[width=\columnwidth]{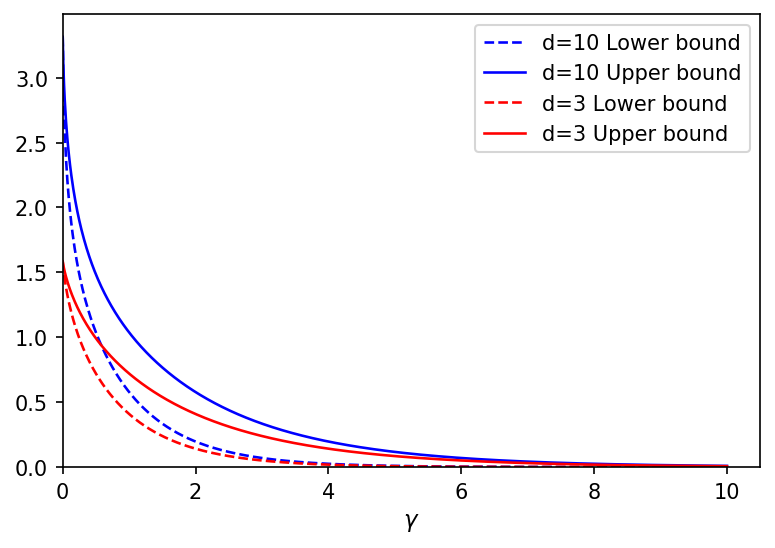}
    \caption{Upper bound and lower bound on  energy-constrained LOCC-assisted quantum capacity, $\mathcal{Q}^{LOCC}_{S\leftrightarrow S'}(\mathcal{N}^\gamma,a^\dagger a,N) $, as given in  Eq.~(\ref{eq:lower_upper}) versus noise parameter $\gamma$. Solid lines correspond to the upper bound, while dashed lines correspond to the lower bound. Different colors  refer to different dimensions of the truncated Hilbert space.}
    \label{fig:gaussian_L_U_Bound}
\end{figure}

\begin{figure}
    \centering
    \includegraphics[width=\columnwidth]{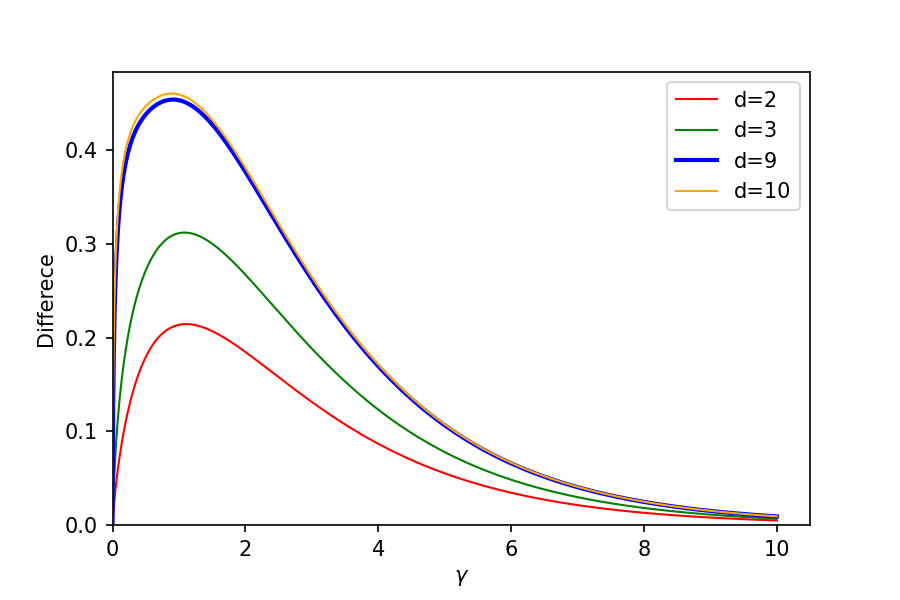}
    \caption{{Difference between upper} bound and lower bound on $\mathcal{Q}^{LOCC}_{S\leftrightarrow S'}(\mathcal{N}^\gamma,a^\dagger a,N) $ as given in Eq.~(\ref{eq:lower_upper}) versus noise parameter $\gamma$. Different curves correspond to different {dimensions of the truncated Hilbert space:} $d=2$ (red curve), {$d=3$} (green curve), $d=9$ (blue curve) and $d=10$ (yellow curve).}
    \label{fig:gaussian_L_U_Bound_rel}
\end{figure}
{As it is shown in Fig.~(\ref{fig:gaussian_L_U_Bound_rel}) for $\gamma<9$ the difference between the lower bound and upper bound in Eq.~(\ref{eq:lower_upper}) increases by increasing the dimension of Hilbert-space}. 
It is worth noticing that the upper bounds {corresponding to different truncated Hilbert-spaces, get close to each other for dimensions larger than $d=9$.} The same happens for the lower bounds. The saturation of upper and lower bounds with the {increasing} Hilbert space dimension can be seen in Fig.~(\ref{fig:uper_d}).
{This effect is in} agreement with the result in \cite{qc_dephasing}. As it is shown in \cite{qc_dephasing} the quantum capacity of a bosonic dephasing channel is equal to the quantum capacity of this channel in the truncated Hilbert space {of dimension 9.} Therefore, {we can conclude} that LOCC-assisted quantum capacity of bosonic dephasing channel without any input energy constraint is upper and lower bounded by {the quantum capacity in the following way:}
\begin{equation}\label{eq:lower_upper_un}
   Q(\mathcal{N}_{S\rightarrow S'}^\gamma) \leq \mathcal{Q}^{LOCC}_{S\leftrightarrow {S}}(\mathcal{N}^\gamma_{S\rightarrow S}) \leq  Q(\mathcal{N}_{S\rightarrow {S}}^{\frac{\gamma}{2}}).
\end{equation}
In conclusion, if {we use 50/50 beamsplitter as squashing channel for a bosonic dephasing channel}, we successfully obtain a lower and an upper bound for two-way LOCC-assisted quantum capacity of the bosonic dephasing channel, with energy constraint (Eq.~(\ref{eq:lower_upper})) and without energy constraint (Eq.~(\ref{eq:lower_upper_un})). As discussed above, these bounds are tight. {For another example of squashing channel}, in the next subsection, we analyze possible candidates among qubit channels. \begin{figure}
    \centering
    \includegraphics[width=\columnwidth]{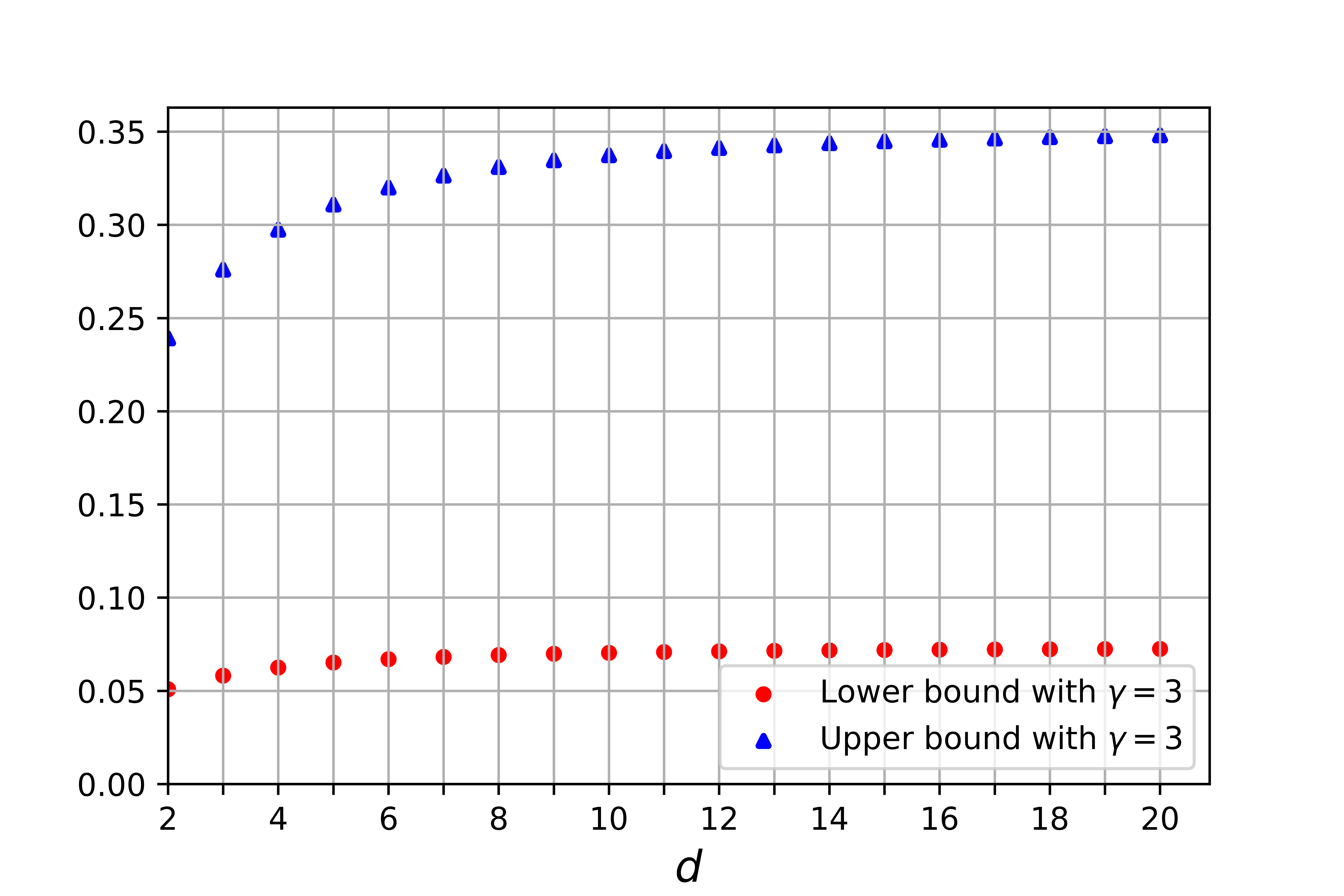}
    \caption{Upper bound (blue triangles) and lower bound (red circles) of  $\mathcal{Q}^{LOCC}_{S\leftrightarrow S'}(\mathcal{N}^\gamma_{S\rightarrow S},a^\dagger a,N) $ given in Eq.~(\ref{eq:lower_upper}) versus Hilbert-space dimension, $d$, when noise parameter $\gamma=3$.}
    \label{fig:uper_d}
\end{figure}


\subsection{Qubit squashing channels}\label{sbc:qubit_sq}
In this subsection, we truncate the infinite-dimensional Hilbert space to a two-dimensional Hilbert space and search for the best qubit squashing channel. 
Among the qubit channels the upper bound for LOCC-assisted quantum capacity of the generalized amplitude damping channel is analyzed by constructing particular squashing channels \cite{KSW2020}. 
Here, our focus is on bosonic dephasing channel in truncated two dimensional Hilbert space and our approach is to use the characterization of symmetric qubit channels \cite{self-comp} to find the one which maximized the mutual information in Eq.~(\ref{eq:self_comp_sq}).
\begin{figure}
    \centering
    \includegraphics[width=\columnwidth]{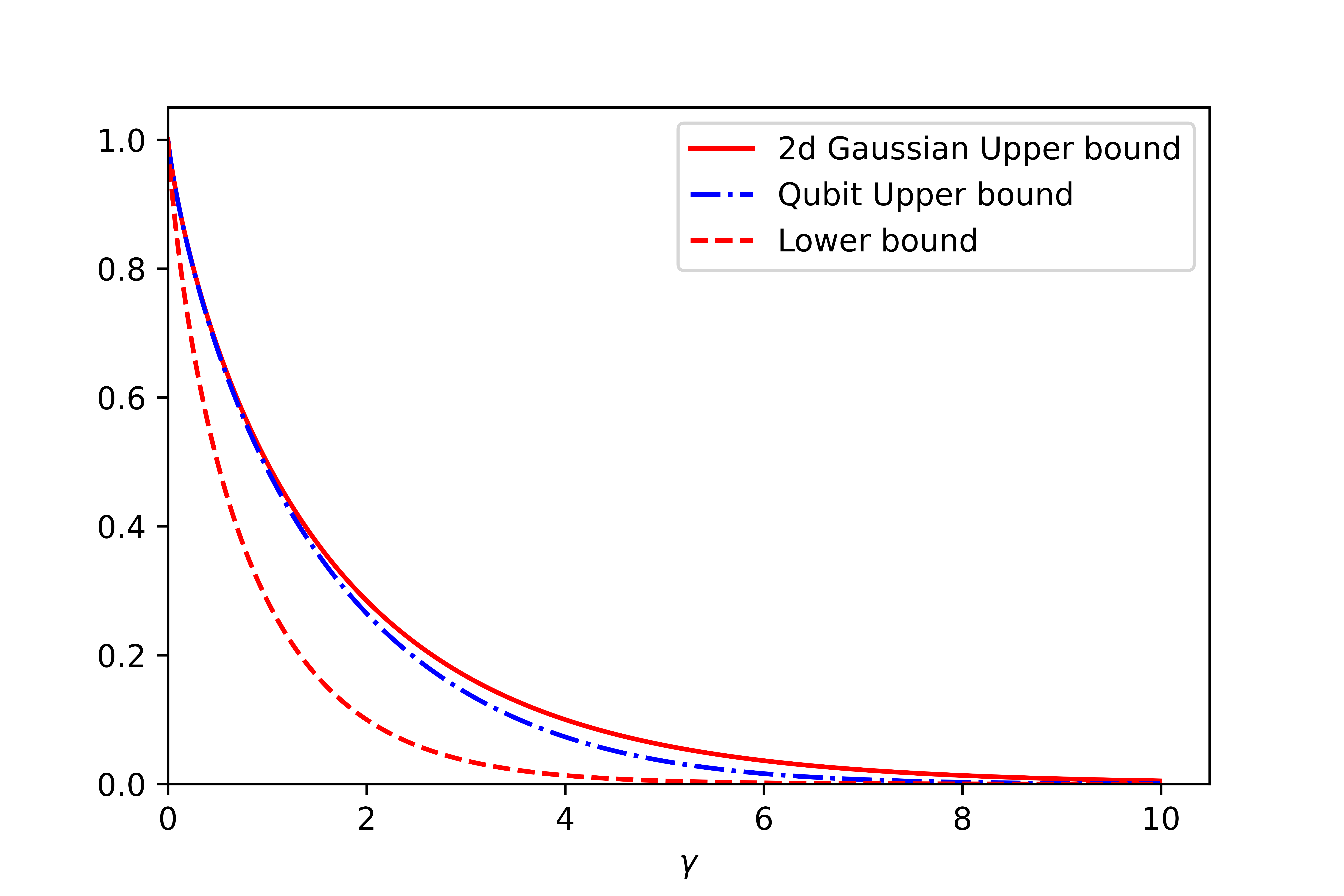}
    \caption{Comparison between upper bounds on  energy-constrained LOCC-assisted quantum capacity, $\mathcal{Q}^{LOCC}_{S\leftrightarrow S'}(\mathcal{N}^\gamma,a^\dagger a,N) $ obtained using one mode Gaussian symmetric channel (solid red curve) and qubit symmetric channel (dash-dotted blue curve) as functions of the dephasing parameter $\gamma$. Red dashed curve shows the lower bound on  energy-constrained LOCC-assisted quantum capacity, $\mathcal{Q}^{LOCC}_{S\leftrightarrow S'}(\mathcal{N}^\gamma,a^\dagger a,N) $ obtained by reverse coherent information.}
    \label{fig:non-g_g_L_U_Bound}
\end{figure}
Following Eq.~(\ref{eq:dephasing_output}), an optimal input state on the truncated input Hilbert space with dimension {two} has the following form:
\begin{equation}
\label{eq:two_dim_opt}
    \sigma_{SE'}^{opt}=\sum_{n=0}^1 p_n\ket{n}_S\bra{n}\otimes\mathcal{N}_{{E\rightarrow E'}}^{sq}(\ket{-\operatorname{i}\sqrt{\gamma}n}_E\bra{-\operatorname{i}\sqrt{\gamma}n}).
\end{equation}
Here $\mathcal{N}_{{E\rightarrow E'}}^{sq}$ is defined on bounded operators over infinite-dimensional Hilbert space. However, by truncating the input Hilbert space, the action of squashing channel $\mathcal{N}_{{E\rightarrow E'}}^{sq}$ is effectively restricted to bounded operators over Hilbert space spanned by {$\{\ket{0}, \ket{-\operatorname{i}\sqrt{\gamma}n}\}$. By employing the Gram-Schmidt procedure 
we construct orthonormal states as follows}
\begin{align}
&\ket{e_0}{:=}\ket{0},\cr
&\ket{e_1}{:=}\frac{\ket{-\operatorname{i}\sqrt{\gamma}}-\langle 0\ket{-\operatorname{i}\sqrt{\gamma}}\ket{e_0}}{\|\ \ket{-\operatorname{i}\sqrt{\gamma}}-\langle 0\ket{-\operatorname{i}\sqrt{\gamma}}\ket{e_0}\|\ }.
\end{align}
Furthermore, as we are restricting our attention to {symmetric} squashing channels, the input and output spaces of {squashing channel} are isomorphic, so we denote it by $\mathcal{N}_{{E\rightarrow E}}^{sq}$. Hence, the squashing channel in Eq.~(\ref{eq:two_dim_opt}) is effectively a qubit channel. Therefore, to derive an upper bound for {squashed entanglement of the channel} over two dimensional Hilbert space, we need to compute the right hand side of the following inequality from Eq.~(\ref{eq:self_comp_sq})
\begin{align}
\label{eq:qubit_uper}
    &\tilde{E}_{sq}(\mathcal{N}_{{S\rightarrow S}}^\gamma,a^\dagger a,N)\leq
    \cr
    &\sup_{p_n}\Bigg( \mathcal{S}(\sigma_S^{opt})-\sup_{\mathcal{N}^{sq}_{E\rightarrow {E}}\in\mathbb{N}_{sym}}I(S;E')_{\sigma^{opt}_{SE'}}\Bigg),
\end{align}
with $\sigma_{SE'}^{opt}$ {given} in Eq.~(\ref{eq:two_dim_opt}) and $\mathcal{N}_{{E\rightarrow E}}^{sq}$ being a {symmetric} qubit {channels characterized in \cite{self-comp}. These latter are described by either of the following sets of Kraus operators:}
\begin{align}\label{eq:2dself_1}
    K_1=\begin{pmatrix}
    \sin(\theta)     &  0\\ 
        0 & \frac{1}{\sqrt{2}}
    \end{pmatrix}\qquad\quad
    K_2=\begin{pmatrix}
        0 & \frac{1}{\sqrt{2}} \\
        e^{\operatorname{i}\phi}\cos(\theta) & 0
    \end{pmatrix}.
\end{align}
and
\begin{align}\label{eq:2dself_2}
    K'_1=\begin{pmatrix}
    1    &  0\\ 
        0 & \frac{1}{\sqrt{2}}\sin(\theta) 
    \end{pmatrix}\qquad\quad 
    K'_2=\begin{pmatrix}
        0 & \frac{1}{\sqrt{2}}\sin(\theta)  \\
        0 & e^{\operatorname{i}\phi}\cos(\theta)
    \end{pmatrix}.
\end{align}
In both cases $\theta\in [0,\pi]$ and $\phi\in [0,2\pi]$. In principle, all the terms on the right hand side of Eq.~(\ref{eq:qubit_uper}) can be computed analytically. However, the final expression after doing the required diagonalization for computing different terms is complicated, and optimization over such expressions is essential.
Hence, we do the optimization on the right hand side of Eq.~(\ref{eq:qubit_uper}) numerically. 

First, we did the optimization over all qubit {symmetric} channels and then {found} the maximum over all probability distributions. The outcome of our numerical analysis for the upper bound is depicted in Fig.~(\ref{fig:non-g_g_L_U_Bound}) with the dashed-dotted blue curve. For better comparison in Fig.~(\ref{fig:non-g_g_L_U_Bound}) we have also presented the upper bound (solid red line) and the lower bound (dashed red line) when the truncated input Hilbert space is two-dimensional, and the squashing channel is a 
Gaussian channel as discussed in \S\ref{subsec:onemode}. Hence we conclude that, at least for two-dimensional truncated input Hilbert space, {symmetric} qubit channels outperform one-mode Gaussian squashing channels for intermediate values of {$2<\gamma<8$}. {However}, the difference {between the upper bounds given by 50/50 beamsplitter and symmetric qubit channels } is negligible for {$2<\gamma<8$}, and vanishes for {the rest of the values
of $\gamma$}.


\section{Conclusion}\label{sec:conc}

We have analyzed the LOCC-assisted quantum capacity {of bosonic} dephasing channel subject to energy constraints on input states. 
Quantum capacity of this non-Gaussian channel is derived in \cite{qc_dephasing} and our aim 
here was to address the role of LOCC on the reliable rate of entanglement establishment between sender and receiver.

Although for practical reasons it is essential to know the LOCC-assisted quantum capacity of the channels \cite{TGW2014}, there is no compact formula in terms of entropic functions for quantifying it. It  was proved that {squashed entanglement of a channel} is an upper bound for LOCC-assisted quantum capacity \cite{wildelocc} and secret-key agreement capacity \cite{TGW2014}. In order to compute {squashed entanglement of a channel}, optimization over {both the set of input states and the} set of isometric extensions of quantum channels for the optimal squashing channel is required. {Thus, 
even computing a bound (for the LOCC-assisted quantum capacity) through such an optimization results in general} is very challenging, if not impossible. 

To overcome these complications for computing the upper bound for LOCC-assisted quantum capacity {of bosonic} dephasing channel, first, we {used} the phase covariant property of the channel to determine the structure of the optimal input state. In other words, we prove that it is sufficient to search for the optimal input state over the set of diagonal states in the {Fock} basis instead of the whole set of density operators. For the optimization over {the} isometric extensions of the  {squashing} channels,
{we prove that it is sufficient to restrict the search to the set of SQMCI channels of the channel environment output, if this set is non-empty.This result is valid for a generic given channel. However for the bosonic dephasing channel we found arguments supporting the conjecture that the set of SQMCI squashing channels is null. Hence, we choose the explicit examples of squashing channels for bosonic dephasing channels from the set of symmetric channels.}

{For the explicit examples, first we considered a 50/50 beamsplitter for the squashing channel. We showed that in this case}, the LOCC-assisted quantum capacity of the channel is less than or equal to the quantum capacity of a bosonic dephasing channel {having the noise parameter} reduced by a factor two. Furthermore, to {derive the tightness possible lower bound}, we {used} the result in \cite{RCI} where a lower bound on LOCC-assisted quantum capacity is introduced in terms of reverse coherent information. 
{We proved} that the reverse coherent information and the quantum capacity of bosonic dephasing channels are equal. Hence, when LOCC is allowed, the reliable rate for {sharing} entanglement between the two parties increases. Therefore, taking into account the results in \cite{qc_dephasing}, we {provided computable upper and lower bounds} for LOCC-assisted quantum capacity of bosonic dephasing channel, and we showed that this result is valid whether or not the input state is subject to energy constraints. More importantly, we showed that these bounds are tight, {meaning} that the quantum capacity of a bosonic dephasing channel with noise parameter $\gamma$ when LOCC is allowed is very close to the quantum capacity of a bosonic dephasing channel with {noise} parameter $\frac{\gamma}{2}$. 
{In other words, with the assistance of LOCC the effective noise parameter is halved.}
 
To extend our analysis beyond {50/50 beamsplitter squashing channel}, we also {discussed} the case where the squashing channel is a {symmetric} qubit channel. As {put forward} in \S\ref{sbc:qubit_sq}, although the upper bound given by the optimal qubit squashing channel is smaller than the one with the optimal one-mode Gaussian channel for {some range of noise parameter}, their difference is negligible.

Our results not only set as an explicit example
to confirm the importance and tightness of the upper {bound} in terms of {squashed entanglement of the channel}, but also motivates analysing the quantum capacity of other non-Gaussian channels {especially} with the assistance of LOCC. {Additionally, it seems interesting to devote further investigations} on characterizing the set of {QMCI and SQMCI channels for particular classis of initial states} and addressing their performance as squashing channels.


\section{Appendix}
\subsection{ On the equality of quantum capacity and reverse coherent information {for} bosonic dephasing channel}
\label{sec:app_1}

As defined in Eq.~(\ref{eq:reverse}), the reverse coherent information {of bosonic} dephasing channel is given by
\begin{equation}\label{eq:reverse_app}
    \mathcal{I}_R(\mathcal{N}^\gamma_{S\rightarrow {S}})=\sup_{\rho_S}\Big(\mathcal{S}(\rho_S)-\mathcal{S}(\mathcal{N}^{\gamma^c}_{{S\rightarrow E}}(\rho_S))\Big).
\end{equation}
Moreover, its quantum capacity is proven to be given by the following optimization problem where the supremum is taken over a mixture of {Fock} states $\rho'_S$ \cite{qc_dephasing}:
\begin{equation}\label{eq:qc_app}
    Q(\mathcal{N}_{S\rightarrow {S}}^\gamma)=\sup_{\rho'_S}\Big(\mathcal{S}(\rho'_S)-\mathcal{S}(\mathcal{N}_{S\rightarrow E}^{\gamma^c}(\rho'_S))\Big).
\end{equation}
Here we show that the supremum in Eq.~(\ref{eq:reverse_app}), as in the case of quantum capacity, is achieved by using mixture of {Fock} states. Therefore, we prove that
\begin{equation}\label{eq:equality of rev_and_qc}
    \mathcal{I}_R(\mathcal{N}^\gamma_{S\rightarrow {S}})=Q(\mathcal{N}_{S\rightarrow {S}}^\gamma).
\end{equation}
{To this end, we first define} the following function:
\begin{equation}
    I_R(\mathcal{N}_{S\rightarrow {S}},\rho_S)\coloneqq\mathcal{S}(\rho_S)-\mathcal{S}(\mathcal{N}^c_{S\rightarrow E}(\rho_S)).
\end{equation}
{Next, we prove} that $I_R(\mathcal{N}_{S\rightarrow {S}},\rho_S)$ is concave with respect to input states $\rho_S$. Consider the following two classical-quantum states:
\begin{align}\label{eq:sigma_tau}
    \sigma_{XS}&=\sum_n p_n\ket{n}_X\bra{n}\otimes\rho_S^{(n)},\cr
    \tau_{XE}&=\sum_n p_n\ket{n}_X\bra{n}\otimes\mathcal{N}^c_{S\rightarrow E}(\rho_S^{(n)}),
\end{align}
where $\mathcal{N}^c_{S\rightarrow E}$ is {the complementary channel to the} channel $\mathcal{N}_{S\rightarrow {S}}$. The following inequalities hold true:
\begin{align}\label{ineq:hirearchy}
    I(X;E)_{\tau_{XE}}&\leq I(X;S)_{\sigma_{XS}},\cr
    {\mathcal{S}}(\tau_{E})-{\mathcal{S}}(E|X)_{\tau_{XE}}&\leq \mathcal{S}(\sigma_{S})
    -{\mathcal{S}}(S|X)_{\sigma_{XS}},\cr
    \mathcal{S}(S|X)_{\sigma_{XS}}-\mathcal{S}(E|X)_{\tau_{XE}}&\leq \mathcal{S}(\sigma_{S})- \mathcal{S}(\tau_{E}).\cr
\end{align}
The first inequality is due to the data processing inequality of mutual information, the second inequality follows the definition of mutual information, and the third inequality is just a rearrangement. Therefore, due to the classical-quantum nature of the states $\sigma_{XS},\tau_{XE}$ in Eq.~(\ref{eq:sigma_tau}), and the last inequality of \eqref{ineq:hirearchy}, we have:
\begin{align}
    \sum_n p_n\Big(&\mathcal{S}(\rho_S^{(n)})-\mathcal{S}(\mathcal{N}^c_{S\rightarrow E}(\rho_S^{(n)}))\Big)\cr
    &\leq \mathcal{S}(\sum_n p_n\rho_S^{(n)})-\mathcal{S}(\sum_n p_n\mathcal{N}^c_{S\rightarrow E}(\rho_S^{(n)})),
    \cr
\end{align}
which is equivalent to
\begin{equation}\label{eq:conc_rev}
    \sum_n p_n I_R(\mathcal{N}_{S\rightarrow {S}},\rho_S^{(n)})\leq I_R\Big(\mathcal{N}_{S\rightarrow {S}},\sum_n p_n\rho_S^{(n)}\Big),
\end{equation}
for any probability distribution $p_n$. {As a consequence} $I_R(\mathcal{N}_{S\rightarrow {S}},\rho_S)$ is a concave function with respect to its {argument} $\rho_S$.

On the other hand, according to Eq.~(\ref{eq:comp_inv}) the complementary channel of a bosonic dephasing channel, is invariant under the phase shift operator {of Eq.\eqref{Eq:PhaseO}.}
{Then, by} considering the unitarily invariance property of the von-Neumann entropy along with the invariance property of the complementary channel {of bosonic} dephasing channel under the action of $U_\theta$, we conclude that:
\begin{equation}\label{eq:unit_inv_reverse}
    I_R(\mathcal{N}^\gamma_{S\rightarrow {S}},{\rho_S({\theta})})=I_R(\mathcal{N}^\gamma_{S\rightarrow {S}},\rho_S)
\end{equation}
where ${ \rho_S({\theta)}:=}U_\theta\rho_S U_\theta$. Employing the concavity of $I_R(\mathcal{N}_{S\rightarrow S'},\rho_S)$ as in Eq.~(\ref{eq:conc_rev}), and Eq.~(\ref{eq:unit_inv_reverse}) {we are led to:}
\begin{equation}\label{eq:conc+inv}
   I_R(\mathcal{N}^\gamma_{S\rightarrow {S}},\rho_S)\leq I_R\Big(\mathcal{N}^\gamma_{S\rightarrow {S}},\int_0^{2\pi}d\theta p(\theta){\rho_S({\theta})}\Big).
\end{equation}
Taking $p(\theta)$ as a flat distribution and expanding $\rho_S$ in the Fock basis, $\rho_S=\sum_{m,n}\rho_{m,n}\ket{m}_S\bra{n}$, we have:
\begin{equation}\label{eq:flat_dist}
    \int_0^{2\pi}{\rho_S({\theta})}p(\theta)d\theta=\frac{1}{2\pi}\sum_{m,n}\int_0^{2\pi}d\theta\rho_{m,n}e^{\operatorname{i}\theta(n-m)}\ket{n}_S\bra{m}.
\end{equation}
Inserting the above result into {the r.h.s.} of \eqref{eq:conc+inv} we end up with:
\begin{equation}
    I_R(\mathcal{N}^\gamma_{S\rightarrow {S}},\rho_S)\leq I_R\Big(\mathcal{N}^\gamma_{S\rightarrow {S}},\sum_n \rho_{n,n}\ket{n}_S\bra{n}\Big).
\end{equation}
Hence, the supremum in Eq.~(\ref{eq:reverse_app}) is achieved by a mixture of {Fock} states $\rho'_S$. In other words, the optimization in Eqs.~(\ref{eq:reverse_app}) and (\ref{eq:qc_app}) are over the same space and it proves their equality as expressed in Eq.~(\ref{eq:equality of rev_and_qc}).
\section*{ACKNOWLEDGMENTS}
A. A and L. M. {acknowledge} financial support by Sharif University of Technology, Office of Vice President for Research
under Grant No. G930209. L. M acknowledges the hospitality by the Abdus Salam International Centre for Theoretical Physics
(ICTP) where parts of this work were completed. 
S. M acknowledges
the funding from the European Union's
Horizon 2020 research and innovation program, 
under grant agreement QUARTET No 862644. The authors are grateful to Mark Wilde for a careful reading of the manuscript. {A. A would like to thank Ernest Y.-Z. Tan for enlightening discussion.}
\bibliographystyle{IEEEtran}
\bibliography{TLOCCAQC-IEEE.bib}
\end{document}